\documentclass[a4paper,UKenglish,cleveref, autoref, thm-restate]{lipics-v2021}

\pdfoutput=1 
\hideLIPIcs  

\usepackage{aliascnt, enumitem}
\newcommand{\R}{\mathbb{R}}
\newcommand{\E}{\mathbb{E}}
\newcommand{\Prob}{\mathbb{P}}
\newcommand{\F}{\mathcal{F}}

\newcommand{\SR}{{SecureR}}

\newtheorem{assumption}[theorem]{Assumption}

\crefname{theorem}{theorem}{theorems}
\Crefname{theorem}{Theorem}{Theorems}
\crefname{proposition}{proposition}{propositions}
\Crefname{proposition}{Proposition}{Propositions}
\crefname{lemma}{lemma}{lemmas}
\Crefname{lemma}{Lemma}{Lemmas}
\crefname{corollary}{corollary}{corollaries}
\Crefname{corollary}{Corollary}{Corollaries}
\crefname{definition}{definition}{definitions}
\Crefname{definition}{Definition}{Definitions}
\crefname{remark}{remark}{remarks}
\Crefname{remark}{Remark}{Remarks}
\crefname{assumption}{assumption}{assumptions}
\Crefname{assumption}{Assumption}{Assumptions}

\title{Reserve Depletion and Security Runway in Proof-of-Stake Systems} 

\author{Paolo Penna}{IOG,  Switzerland}{paolo.penna@iohk.io}{}{}
\author{Manvir Schneider}{Cardano Foundation, Switzerland}{manvir.schneider@cardanofoundation.org}{}{}

\authorrunning{M. Schneider and P. Penna} 

\Copyright{} 

\ccsdesc[100]{} 

\keywords{Blockchain, Proof-of-Stake, Reserve, Tokenomics, Equilibrium, Repeated Game} 

\category{} 

\relatedversion{} 

\nolinenumbers 

\EventEditors{}
\EventNoEds{1}
\EventLongTitle{}
\EventShortTitle{}
\EventAcronym{}
\EventYear{}
\EventDate{}
\EventLocation{}
\EventLogo{}
\SeriesVolume{}
\ArticleNo{}

\begin{document}

\maketitle

\begin{abstract}
Many proof-of-stake protocols finance validator rewards from two sources:
transaction fees and a finite reserve of native tokens. This creates a
dynamic hand-off problem. Early in the life of the system, fees may be too
small to fund the target level of security; later, fees may become sufficient.
The central question is whether the reserve provides enough runway for the
protocol to remain secure until this fee-only region is reached.
We study this problem in a discrete-time stochastic model of validator
participation. Token price and transaction demand fluctuate over time, while
validators choose participation strategically in each state. We solve the
validator entry game in closed form and derive an exact state-dependent reserve
threshold: for every token price and demand state, the threshold is the minimal
reserve stock necessary and sufficient to sustain a target security level.
This threshold separates three regions: infeasibility, reserve-dependent
security, and fee-only security.
The threshold turns reserve adequacy into a hitting-time problem. Security
fails when the reserve first falls below the state-dependent threshold, and a
successful hand-off occurs exactly when the fee-only region is reached before
that failure time. We derive conservative finite-horizon stress-test guarantees
that convert lower confidence bands for token price and demand into reserve
requirements, and under lognormal price-demand dynamics we obtain explicit
failure-probability and expected hand-off-time bounds. Finally, we extend the
model to forward-looking validators and derive the Markov participation
condition that captures how current participation affects future reserve-funded
rewards.
The main implication is that reserve policy should not be evaluated by nominal
depletion dates or steady-state reward ratios alone. A protocol can have a large
nominal reserve and still be close to security failure after adverse price or
demand shocks. Conversely, once demand crosses the fee-only threshold, the
reserve becomes redundant for security. This paper provides a tractable
equilibrium framework for stress-testing this transition.
\end{abstract}

\section{Introduction}

Proof-of-stake blockchains typically pay validators from two sources: current transaction fees and a reserve of previously minted or retained tokens. The reserve is meant to bridge the early phase of the system, when adoption is still low and fees alone do not support the amount of active stake needed for security. Cardano is a canonical example of such a design: its reward system combines transaction fees with a controlled drawdown of reserves \cite{cardanoMonetaryPolicyDocs}. Other examples include Avalanche \cite{avalanche}, Algorand \cite{algorand_two}, Ripple \cite{ripple}, and Hedera \cite{hedera}. The policy problem is therefore not simply whether fee revenue is positive in the long run. The relevant question is whether the protocol can remain secure \emph{along the entire transition path}.

That transition problem has two distinct sources of fragility. First, fee income is uncertain because transaction demand is uncertain. Second, even if the reserve is large in token units, its purchasing power is stochastic because validator costs are incurred in an external numeraire while the reserve is held in the native token. A reserve that looks ample at one token price can become inadequate after a price drawdown.

This creates a governance problem that is not visible from the reserve balance
alone. Suppose two protocols have the same number of reserve tokens. The first
has strong current demand and a high token price; the second faces weak demand
and a price drawdown. Their nominal reserves are identical, but their ability to fund validator participation is not. Conversely, a protocol with a declining reserve may be safe if fee demand has already grown enough to support the target security level without subsidies. The relevant object is therefore a state-contingent reserve
requirement.

The security-runway perspective is designed to capture exactly this issue. A protocol has runway at time $t$ if the reserve stock at that date is large enough, given the current state of demand and token price, to keep equilibrium security above the required minimum. Runway ends not when the reserve literally reaches zero, but when it first becomes too small to finance the security shortfall.

This paper studies the following fundamental question:
\begin{quote}
    \em{Can a blockchain protocol guarantee secure operation throughout the transition from
reserve-funded to fee-only validator rewards, and under what conditions does this transition fail?}
\end{quote}
We answer this question by deriving a reserve threshold that can be interpreted as a state-contingent runway criterion: given price, demand, and a target security level, it tells whether the current reserve is sufficient to sustain equilibrium security. This allows reserve
policy to be stress-tested state by state.

\paragraph*{Our Contribution}
We study the reserve-to-fee transition in a stochastic state model with two
exogenous state variables: token price and user demand for blockspace. Validators
choose participation strategically, so equilibrium security and fees respond to
the current reserve, price, and demand state. Our contributions are as follows.
\begin{enumerate}[label=(\roman*)]
\item We solve the validator participation game in closed form and prove
existence and uniqueness of a symmetric Nash equilibrium
(\Cref{thm:unique-sne}). The equilibrium shows that the security value of a
given reserve stock depends on the current token price.

\item We derive an exact state-dependent reserve threshold
(\Cref{thm:threshold}). For each price-demand state, the threshold is the
minimal reserve stock necessary and sufficient to sustain a target security
level.

\item We use this threshold to formulate the dynamic hand-off problem
(Section~\ref{sec:dynamic}). Security persists while the reserve remains above
the threshold, and hand-off succeeds when the fee-only region is reached before
failure (\Cref{prop:runway}, \Cref{thm:stress-test}, \Cref{cor:fee-only}).

\item We extend the baseline model to a forward-looking Markov environment
(Section~\ref{sec:Markov}). We prove finite-horizon Markov perfect equilibrium
existence and characterize how continuation values alter current validator
participation
(\Cref{thm:mpe-existence}, \Cref{prop:mpe-uniqueness,prop:dynamic-foc,prop:markov-runway}).

\item We specialize the runway analysis to geometric Brownian token price and
discrete-time lognormal demand (Section~\ref{sec:Brownian}). This yields
explicit finite-horizon failure-probability guarantees and bounds on hand-off
timing (\Cref{thm:parametric-failure}, \Cref{prop:parametric-handoff}).
\end{enumerate}

\paragraph*{Related Literature}
Our analysis is related to three strands of literature. First, the Ouroboros line of work studies the protocol and security foundations of proof-of-stake systems and provides the incentive-theoretic background for stake-based consensus \cite{kiayias2017ouroboros,david2018praos}. Second, economic analyses of proof-of-stake study equilibrium incentives, reward design, and the distributional effects of staking rewards \cite{saleh2021blockchain,fanti2019compounding,irresberger2023coin}. Third, the blockchain fee-market literature studies how congestion pricing and transaction fees support decentralized infrastructure \cite{huberman2021monopoly,ma2022transaction}.  Dynamics models for cryptocurrencies include \cite{kiayias2023would,decentralization_friction,CK2022,DBLP:conf/aft/KiayiasLP25}, while geometric Brownian motion is used, e.g., in \cite{cong2021tokenomics,cong2022token}. To the best of our knowledge, existing work has not isolated the reserve
hand-off problem studied here: a finite native-token reserve, strategic validator participation, stochastic token price and demand, and a state-dependent threshold for sustaining a target security level until fee-only operation becomes possible.
Earlier work models the effects of  fee-only rewards in Bitcoin: First, higher variance  induce ``forking'' strategies of miners competing for transactions with high tips \cite{BitcoinInstability}. Moreover, attackers can bribe honest participants offering tips for mining a forked chain and increase the chances of successfully performing double spending \cite{IncentiveBlockchainForks}. 

\paragraph*{Contribution Relative to the Literature}
The present paper differs from these strands by focusing on a dynamic reserve-funded transition problem. The object of interest is not simply equilibrium rewards at a point in time, nor fee formation in isolation, but the reserve threshold required to sustain security until fee funding becomes sufficient on its own. That is the role of the security-runway concept developed below.

Our model has some similarities with the validators side in \cite{kiayias2023would}: There validators decide the amount of blockspace (or security) to provide, and their utility depends on the per-unit fee of the system as equilibrium.  That paper, however, does not consider reserve constraints, and it models users demand differently. 

In our model both token prices and demand are exogenous quantities. In \cite{decentralization_friction,DBLP:conf/aft/KiayiasLP25} instead, the system value is determined by the protocol evolution (at some cost function) and the token price is an endogenous quantity which is part of the equilibrium (the price at which tokens are exchanged in a spot market involving validators and users). In this respect, our model exhibits greater flexibility and generality, while additionally incorporating the aforementioned reserve constraint, which is absent in \cite{decentralization_friction,DBLP:conf/aft/KiayiasLP25}.

\section{Model}\label{sec:model}

\renewcommand{\xi}{FeeMax}
\renewcommand{\beta}{Sec}
\newcommand{\price}{Price}
\newcommand{\fee}{f}

Time is discrete and indexed by $t=0,1,2,\dots$. All random variables are defined on a filtered probability space
$
(\Omega,\F,(\F_t)_{t\ge 0},\Prob).
$
The protocol state at time $t$ is given by a tripel $ 
X_t=(R_t,\price _t,\xi_t),
$
where:
\begin{itemize}
\item $R_t \in \R_+$ is the reserve stock measured in native-token units.
\item $\price _t \in \R_{++}$ is the token price in an external numeraire. Validator costs are paid in this numeraire.
\item $\xi_t \in \R_{++}$ is a demand state governing willingness to pay for block space. Intuitively, this is the largest fee that at least some user is willing to pay (any higher fee results in zero demand). 
\end{itemize}
The policy parameters are fixed ex ante:
\begin{align*}
\theta \in [0,1), &&  \rho \in (0,1),
\end{align*}
where $\theta$ is the share of fee revenue diverted into the reserve and $\rho$ is the fraction of the reserve paid out each period. Token price and demand are stochastic, exogenous, and positive quantities: 

\begin{assumption}[State dynamics]
\label{ass:state}
The token price process $(\price _t)_{t\ge 0}$ and the demand process $(\xi_t)_{t\ge 0}$ are strictly positive and adapted to the filtration $(\F_t)_{t\ge 0}$.
\end{assumption}

\subsection{Game Form and Economic Scope}

The players are the $N$ validators. Their date-$t$ actions are participation levels, interpreted as active stake or security supply. Users are not modeled as strategic players; instead, user behavior is summarized by an inverse-demand schedule for block space. Likewise, token-market trading is not modeled strategically; it enters through the stochastic price process $\price _t$. This separation isolates the security-runway question: how the reserve interacts with validator incentives when current fee conditions and the external token price fluctuate over time.

This modeling choice has two implications that should be kept distinct throughout the paper. First, the token price $\price _t$ is externally given to validators at date $t$, so the model does not claim to explain token valuation. Second, the transaction fee $\fee_t$ is endogenous, because it is the market-clearing fee induced by validator participation and the current demand state $\xi_t$. Thus demand and token price are state variables, while validator participation and fees are equilibrium outcomes.

\subsection{A Reduced-Form Security Technology}

There are $N \ge 2$ validators. Validator $i$ chooses an amount of active stake, or security supply,
\[
a_{i,t} \in \R_+,
\]
and aggregate security and blockspace is
\begin{align}
\label{eq:blockspace}
s_t := \sum_{i=1}^N a_{i,t}.
\end{align}
The baseline model is symmetric across validators: every validator has the same feasible action set, the same cost function, and the same proportional reward-sharing rule. \footnote{The model therefore suppresses ex ante heterogeneity in stake, scale, or operating cost.}

The action \(a_{i,t}\) should be interpreted as validator \(i\)'s effective
supply of secure blockspace at date \(t\). It combines the validator's active
participation with the amount of transaction-processing capacity that the
validator makes available to users. Thus, a validator who is online but produces
empty or only partially filled blocks is represented as supplying a lower
effective \(a_{i,t}\). The aggregate \(s_t\) is therefore the quantity of
secure blockspace supplied in equilibrium. It enters the security side of the
model because larger effective participation raises the security level, and it
enters the demand side because it is the quantity of blockspace cleared at the
market fee. This convention keeps the model focused on the funding problem:
how much effective secure capacity validators are willing to supply when rewards
come from fees and reserve payouts.

Users are described by an \emph{inverse demand curve} expressed in terms of the external numeraire (for example, in dollars):
\begin{align}
    \label{eq:inv-dem}
P(s_t,\xi_t) := \xi_t - \beta \cdot s_t \ .    
\end{align}
Intuitively, given the available blockspace provided by the validators \eqref{eq:blockspace}, Equation~\ref{eq:inv-dem} gives the \emph{market-clearing} fee: 
\begin{itemize}
    \item Users demand equals the supplied blockspace $s$ if the protocol charges $P(s,\xi)$ dollars per unit of blockspace.~\footnote{Note that we are assuming each user to consume one unit of blockspace -- say one transaction. Hence, the demand corresponding to \eqref{eq:inv-dem} is of the form $D(p) = (\xi - p)/\beta$ for $p\in [0,\xi]$, and $D(p)=0$ for $p> \xi$.}
    \item The parameter $\beta>0$ determines the rate at which market-clearing fees decline as total blockspace supply increases (intuitively, the protocol must lower fees to accommodate higher user demand).
\end{itemize}
As already mentioned above, 
while $\xi_t$ is an exogenous parameter controlling users demand, blockspace $s_t$ is strategic and results from validators aiming at maximizing their own utilities.  If the supplied blockspace is $s_t$, the protocol posts a token-denominated fee
\begin{align}
\label{eq:token-fee}
\fee_t(s_t,\xi_t,\price_t) := \frac{P(s_t,\xi_t)}{\price _t} = \frac{\xi_t-\beta\cdot s_t}{\price _t}\ .
\end{align}
The resulting fee revenue, in external numeraire, is thus
\[
\price _t \cdot \fee_t \cdot s_t = (\xi_t-\beta\cdot s_t)\cdot s_t \ .
\]

The model restricts attention to action profiles that yield nonnegative clearing fees, reflecting the natural requirement that the system does not subsidize users for consuming blockspace. Since $P(s_t,\xi_t)\ge 0$ requires $s_t\le \xi_t/\beta$, we impose the symmetric feasibility bound
\begin{align}\label{eq:symmetric-feasibility-bound}
a_{i,t} \in \left[0,\frac{\xi_t}{N\cdot \beta}\right].    
\end{align}
This condition implies that $0 \le s_t \le \xi_t / \beta$, ensuring that clearing fees are always nonnegative. Furthermore, in the context of symmetric equilibria—where all validators select an identical strategy—these two conditions are equivalent.

\subsection{Reserve Dynamics and Validator Rewards}

The reserve absorbs a fraction $\theta$ of current fee revenue and pays out a fraction $\rho$ of the reserve each period. Because the reserve is held in tokens, its law of motion is
\begin{equation}
\label{eq:reserve-law}
R_{t+1} = (1-\rho)\cdot R_t + \theta\cdot  \frac{(\xi_t-\beta\cdot s_t)\cdot s_t}{\price _t}.
\end{equation}

The total reward pool paid to validators in period $t$, measured in the external numeraire, is
\begin{equation}
\label{eq:reward-pool}
W_t := (1-\theta)\cdot (\xi_t-\beta\cdot s_t)\cdot s_t + \rho \cdot \price_t \cdot R_t.
\end{equation}
The first term is the portion of fees not diverted to the reserve; the second term is the external value of the reserve payout.

Each validator has quadratic operating cost
\begin{align}\label{eq:val-costs}
\frac{\kappa}{2}a_{i,t}^2,
&&  \kappa>0 \ .
\end{align}
The quadratic form is the standard way to encode increasing marginal cost of active stake and guarantees an interior first-order condition whenever the equilibrium is not constrained by the capacity bound.

\subsection{Stage-Game Equilibrium}\label{sec:baseline}
In this section, we consider a single-stage game which is fully specified by the current state: The current reserve $R_t$, token price $\price_t$, and demand $\xi_t$. As we focus only on a generic stage $t$, we suppress $t$ and consider a generic  state $x=(R,\price,\xi)$. The stage game at state $x$ is played only by the $N$ validators. A pure action for validator $i$ is a feasible action $a_i$ satisfying \eqref{eq:symmetric-feasibility-bound}, that is, $a_i\in [0,\xi/(N\cdot \beta)]$. Actions are chosen simultaneously. If the aggregate action profile is $a=(a_1,\dots,a_N)$ and $s=\sum_i a_i>0$, validator $i$ receives its pro rata share of the reward pool \eqref{eq:reward-pool} given the current state and action profile, 
$$W=W(a;x):= \frac{a_i}{s}\cdot \left((1-\theta)\cdot (\xi-\beta\cdot s)\cdot s + \rho \cdot \price \cdot R\right).$$ The resulting utility (reward minus incurred cost) is thus 
\begin{align}
\label{eq:utility}
u_i(a;x) := \frac{a_i}{s}  \cdot W  - \frac{\kappa}{2}a_i^2 =  \frac{a_i}{s}\left((1-\theta)\cdot (\xi-\beta\cdot s)\cdot s + \rho \cdot \price \cdot R\right) - \frac{\kappa}{2}a_i^2.    
\end{align}
When all validators choose zero, we set $u_i(0,\dots,0;x)=0$.

\begin{definition}[Symmetric Nash equilibrium]
\label{def:sne}
Fix a state $x=(R,\price,\xi)$. A \emph{symmetric Nash equilibrium} is a profile $a^*(x)=(a^*(x),\dots,a^*(x))$ such that
\[
u_i\big(a^*(x),\dots,a^*(x);x\big)
\ge
u_i\big(a_i,a_{-i}^*(x);x\big)
\]
for every validator $i$ and every feasible deviation $a_i \in [0,\xi/(N\beta)]$, where $a_i,a^*_{-i}$ denotes the vector obtained by replacing the $i^{th}$ entry in $a^*$ with $a_i$.

The induced equilibrium security, fee, and next-period reserve are given by:
\begin{align*}
s^*(x)= & N\cdot a^*(x), \qquad \fee^*(x)\stackrel{\eqref{eq:token-fee}}{=}\frac{\xi-\beta\cdot s^*(x)}{\price}, \\ R^+(x)=& (1-\rho)\cdot R+\theta \cdot f^*(x) \cdot s^*(x) =  (1-\rho)\cdot R+\theta \cdot \frac{(\xi-\beta\cdot s^*(x))\cdot s^*(x)}{\price}.
\end{align*}
\end{definition}

Symmetric equilibria arise naturally in the context of blockchain systems (see, e.g., \cite{decentralization_friction,DBLP:conf/aft/KiayiasLP25}), as they constitute the intended design objective of many protocols: they correspond to a regime of maximal decentralization \cite{DBLP:conf/acns/OvezikKMWK25,motepalli2025decentralization}.

The definition above is for a generic state and yields a state by state evolution in the natural way:  Given the current state $x_t = (R_t,\price_t, \xi_t)$,  the reserve at the next state is $R_{t+1} = R^+(x_t)$, and therefore the next state is   $x_{t+1} = (R_{t+1},\price_{t+1},\xi_{t+1})$. Dynamic incentives are encoded only through the reserve stock carried to the next date; validators do not solve an intertemporal control problem. The analysis focuses on whether the current state provides enough funding to sustain current security, not on dynamic reputation effects.

Observe that the state is composed of three variables, of which two are exogenous: The reserve stock $R$ is an endogenous state variable inherited from past protocol outcomes. The token price $\price$ and the demand shifter $\xi$ are exogenous state variables from the perspective of the validator game. Conditional on a state $x$, equilibrium participation $a^*(x)$ and the clearing fee $\fee^*(x)$ are endogenous objects solved within the model.

\section{Static Equilibrium Analysis}

This section solves the stage game exactly. The main result is a closed-form symmetric Nash equilibrium, which then becomes the building block for the dynamic runway analysis.

\subsection{Existence and Uniqueness}

Define the constants
\begin{align}
\label{eq:A-B}
A_N := \frac{\kappa}{N} + \frac{N+1}{N}\cdot (1-\theta)\cdot \beta\ ,
&& 
B_N := \frac{N-1}{N}\cdot \rho\ .
\end{align}

The next theorem shows uniqueness of symmetric equilibria in the stage game. 

\begin{theorem}[Unique symmetric Nash equilibrium]
\label{thm:unique-sne}
For every state $x=(R,\price,\xi) \in \R_+ \times \R_{++}^2$, the stage game admits a unique symmetric Nash equilibrium. Its aggregate security level is
\begin{equation}
\label{eq:s-star}
s^*(x)
=
\min\left\{
\frac{(1-\theta)\cdot\xi + \sqrt{(1-\theta)^2\cdot \xi^2 + 4A_N B_N \cdot \price \cdot  R}}{2A_N},
\frac{\xi}{\beta}
\right\}.
\end{equation}
The corresponding equilibrium fee is
$
\fee^*(x)=\frac{\xi-\beta\cdot s^*(x)}{\price}
$. 
\end{theorem}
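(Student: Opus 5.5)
The plan is a direct best-response analysis of the stage game: show each validator's payoff is strictly concave in its own action, derive the symmetric first-order condition, and reduce it to a quadratic in $s$ whose positive root is \eqref{eq:s-star}. Fix $x=(R,\price,\xi)$ and a profile $a_{-i}$, and write $s=a_i+S_{-i}$ with $S_{-i}:=\sum_{j\ne i}a_j$. Whenever $s>0$, \eqref{eq:utility} becomes
\[
u_i=(1-\theta)\,a_i(\xi-\beta s)+\rho\,\price\, R\,\frac{a_i}{s}-\frac{\kappa}{2}a_i^2 .
\]
Differentiating in $a_i$ gives
\[
\partial_{a_i}u_i=(1-\theta)(\xi-\beta s)-(1-\theta)\beta a_i+\rho\,\price\, R\,\frac{S_{-i}}{s^2}-\kappa a_i .
\]
The second derivative is $-2(1-\theta)\beta-\kappa-2\rho\,\price\, R\,S_{-i}/s^3<0$. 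So $u_i$ is strictly concave in $a_i$ on the feasible interval (on $(0,\xi/(N\beta)]$ if $S_{-i}=0$). Hence best responses are unique and characterized by the KKT conditions: an interior zero of $\partial_{a_i}u_i$, or the cap $\xi/(N\beta)$ with $\partial_{a_i}u_i\ge0$ there, or $0$.

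Next I impose symmetry, $a_j=a$ for all $j$ and $s=Na>0$, so that $S_{-i}=(N-1)s/N$. Multiplying the derivative by $s$ gives
\[
s\,\partial_{a_i}u_i=(1-\theta)\xi s-\Big(\tfrac{\kappa}{N}+\tfrac{N+1}{N}(1-\theta)\beta\Big)s^2+\tfrac{N-1}{N}\rho\,\price\, R
= -\big(A_Ns^2-(1-\theta)\xi s-B_N\price R\big)
\]
with $A_N,B_N$ as in \eqref{eq:A-B}. The quadratic $q(s)=A_Ns^2-(1-\theta)\xi s-B_N\price R$ satisfies $q(0)\le0$ and $A_N>0$. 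It therefore has exactly one nonnegative root $\bar s$ exceeding every other root, namely the first entry of the minimum in \eqref{eq:s-star}. Moreover $\bar s>0$, since $(1-\theta)\xi>0$. Also $q<0$ on $(0,\bar s)$ and $q>0$ beyond $\bar s$.

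The case analysis then goes as follows.
\begin{enumerate}
\item If $\bar s\le\xi/\beta$, then $a=\bar s/N$ satisfies the first-order condition at a feasible point, so by concavity it is each player's best response and this profile is an equilibrium.
\item If $\bar s>\xi/\beta$, then at the cap $s=\xi/\beta<\bar s$ we have $q<0$, so $\partial_{a_i}u_i>0$ and the cap is the best response. This gives the equilibrium $s^*=\xi/\beta$.
\end{enumerate}
Uniqueness uses the same sign information. An interior symmetric equilibrium must have $q(s)=0$, which forces $s=\bar s$ and is only feasible when $\bar s\le\xi/\beta$. A symmetric equilibrium at the cap requires $q(\xi/\beta)\le0$, i.e.\ $\bar s\ge\xi/\beta$. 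The two descriptions coincide when $\bar s=\xi/\beta$. The fee formula then follows from \eqref{eq:token-fee}.

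The main obstacle is the all-zero profile, where $u_i$ is defined by convention and the derivative computation above does not apply, so it must be ruled out by hand. Suppose everyone else plays $0$ and $i$ deviates to a small $\varepsilon>0$. Then $s=\varepsilon$ and the deviator's payoff is $(1-\theta)\varepsilon(\xi-\beta\varepsilon)+\rho\,\price\, R-\tfrac{\kappa}{2}\varepsilon^2$. This is strictly positive for small $\varepsilon$ even when $R=0$, because $\xi>0$ and $\theta<1$. So $(0,\dots,0)$ is never an equilibrium. This also confirms that the case $R=0$, where $\bar s=(1-\theta)\xi/A_N$, is covered by the formula.
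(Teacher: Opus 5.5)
Your proposal is correct and follows essentially the same route as the paper's proof: strict concavity of $u_i$ in the own action, the symmetric first-order condition reduced to the quadratic $A_Ns^2-(1-\theta)\xi s-B_N\price R=0$, a case split on whether its positive root lies below $\xi/\beta$, and a separate exclusion of the all-zero profile (your sign analysis of $q$ is the paper's monotonicity of $F(s)=-q(s)/s$). One wording fix: when $R=0$ the root $0$ is also nonnegative, so you should say ``unique positive root,'' which is all your uniqueness step actually uses.
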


\begin{remark}
The equilibrium formula shows exactly how the reserve enters incentives. Current fee funding depends on demand $\xi$ and congestion $\beta$, while reserve funding enters only through the product $\price \cdot R$, the reserve's value in \emph{the external numeraire}. The same reserve stock therefore has different security consequences at different token prices. This is the main reason a deterministic reserve-only calculation can be misleading.
\end{remark}

Theorem~\ref{thm:unique-sne} implies that, in any symmetric equilibrium, the validators' utilities are given by the following expression.

\begin{corollary}\label{cor:utilities-equilibrium}
    For every state $x=(R,\price,\xi) \in \R_+ \times \R_{++}^2$, the equilibrium utilities of the validators in the corresponding stage game are
    \begin{align}
    \label{eq:eq-utility}
    u_i^*(x) := &  \frac{1}{N}\cdot \bigl((1-\theta) \cdot (\xi - \beta \cdot s^*(x))\cdot s^*(x) + \rho \cdot \price\cdot R\bigr) - \frac{\kappa}{2}\cdot \left(\frac{s^*(x)}{N}\right)^2
    \\
    = &  s^*(x) \cdot \frac{(1-\theta) \cdot \xi  + \rho \cdot \price\cdot R}{N} - (s^*(x))^2 \cdot \left( \frac{(1-\theta) \cdot \beta}{N} +  \frac{\kappa}{2N^2}\right) \ . 
\end{align} 
In particular, in any non-interior equilibrium, the utilities are equal to 
\begin{align}
    \label{eq:eq-utility-non-interior}
    u_i^*(x) = &  \frac{1}{N}\cdot \bigl(\rho \cdot \price\cdot R\bigr) - \frac{\kappa}{2}\cdot \left(\frac{\xi}{N \cdot \beta}\right)^2 \ . 
\end{align}

\end{corollary}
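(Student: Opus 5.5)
The plan is a direct substitution of the equilibrium of \Cref{thm:unique-sne} into the payoff \eqref{eq:utility}; no new game-theoretic argument is needed. Fix a state $x=(R,\price,\xi)$. By \Cref{thm:unique-sne} the unique symmetric equilibrium has every validator playing $a^*(x)=s^*(x)/N$, with $s^*(x)$ given by \eqref{eq:s-star}.

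First I would dispose of the degenerate profile. If $s^*(x)=0$, then the convention $u_i(0,\dots,0;x)=0$ applies. The formula gives $0$ as well whenever $\rho\cdot\price\cdot R=0$. I would check from \eqref{eq:s-star} that $s^*(x)=0$ can only occur when $\xi=0$ and $R=0$, and $\xi=0$ is excluded because $\xi>0$. So $s^*(x)>0$ throughout, and the pro rata share $a_i/s$ is well defined.

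Next, substituting $a_i=s^*/N$ gives the share $a_i/s=1/N$. The payoff \eqref{eq:utility} then becomes
\[
\tfrac1N\bigl((1-\theta)(\xi-\beta s^*)s^*+\rho\cdot\price\cdot R\bigr)-\tfrac{\kappa}{2}(s^*/N)^2,
\]
which is the first line of \eqref{eq:eq-utility}.

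The second line is obtained by expanding $(\xi-\beta s^*)s^*=\xi s^*-\beta (s^*)^2$ and collecting the terms in $(s^*)^2$, which gives the coefficient $\frac{(1-\theta)\beta}{N}+\frac{\kappa}{2N^2}$. This expansion is the only step needing care. The reserve term $\rho\cdot\price\cdot R/N$ enters additively and is not multiplied by $s^*$: the pro rata factor $a_i/s$ cancels the $s$ in front of it. So I would write the linear part as $s^*(1-\theta)\xi/N$ plus the constant $\rho\cdot\price\cdot R/N$. I would read the second display in that sense and, if necessary, correct its grouping so that the identity with the first line holds exactly.

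Finally, for the non-interior case I use that the minimum in \eqref{eq:s-star} is attained by the capacity bound, so $s^*(x)=\xi/\beta$. Then $\xi-\beta s^*=0$, the fee term vanishes, and the cost is $\frac{\kappa}{2}\bigl(\xi/(N\beta)\bigr)^2$. This yields \eqref{eq:eq-utility-non-interior}. Equivalently, in this case the clearing fee $\fee^*(x)$ is zero, so rewards come only from the reserve payout.
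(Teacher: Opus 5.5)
Your proposal is correct and follows the paper's argument, which is left implicit there: substitute the symmetric equilibrium $a_i=s^*(x)/N$ from \Cref{thm:unique-sne} into \eqref{eq:utility}, and in the boundary case use $s^*(x)=\xi/\beta$ so that the fee term vanishes. You are also right to flag the second display: as printed it differs from the first line by $\frac{\rho\cdot\price\cdot R}{N}\,(s^*(x)-1)$, so $\rho\cdot\price\cdot R/N$ must appear as an additive constant outside the factor $s^*(x)$; the paper's later uses, at $R=0$ and in \eqref{eq:eq-utility-non-interior}, rely only on the first line and are unaffected.
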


The second part of the corollary above implies that, for sufficiently large $R$, the  utility of each validator  is entirely reserve-driven. This is because the resulting equilibrium must be non-interior, and the corresponding market-clearing fee \eqref{eq:token-fee} is zero. The next corollary deals with the opposite case of small reserve (intuitively, when $R$ is not large enough to make the utility in \eqref{eq:eq-utility-non-interior} nonnegative).

\begin{corollary}\label{cor:interior} For sufficiently small $R$, the unique symmetric equilibrium must be interior, that is, $$s^*(R,\price,\xi) <\frac{\xi}{\beta}.$$ In particular, this holds true for any $R< R_{\min} := \left(\frac{\xi}{N \cdot \beta}\right)^2 \cdot \frac{\kappa N}{2 \rho \cdot \price} = \frac{\kappa \xi^2}{2 \rho \cdot \price \cdot N \cdot \beta^2}$. 
\end{corollary}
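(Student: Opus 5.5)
Since \Cref{thm:unique-sne} gives existence and uniqueness, we only need to rule out that the unique symmetric equilibrium is the corner profile $a_i=\xi/(N\beta)$ for all $i$ when $R<R_{\min}$. The argument is by contradiction: if the corner were an equilibrium, each validator's equilibrium utility would be given by \eqref{eq:eq-utility-non-interior} in \Cref{cor:utilities-equilibrium}. For $R<R_{\min}$ this utility is strictly negative, because
\[
\frac{\rho\,\price\,R}{N} < \frac{\rho\,\price}{N}\cdot\frac{\kappa\,\xi^2}{2\rho\,\price\,N\beta^2}=\frac{\kappa}{2}\Big(\frac{\xi}{N\beta}\Big)^2 .
\]

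Next I exhibit a profitable deviation. Validator $i$ deviates to $a_i=0$, which is feasible since $0\in[0,\xi/(N\beta)]$. The others still play $\xi/(N\beta)>0$ and $N\ge 2$, so aggregate supply after the deviation is still positive. The deviator's pro rata share $a_i/s$ is then $0$ and its cost is $0$, so its utility from \eqref{eq:utility} equals $0$. This is strictly more than the negative corner utility, so the corner profile cannot be a Nash equilibrium.

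Because the unique symmetric equilibrium exists and is not the corner, it must satisfy $s^*<\xi/\beta$. For the phrase ``sufficiently small $R$'', the explicit bound $R_{\min}$ is enough. Equivalently, one could compare the two arguments of the $\min$ in \eqref{eq:s-star} and check that the square-root branch is strictly below $\xi/\beta$ for small $R$, using continuity in $R$ at $R=0$. At $R=0$ the branch equals $(1-\theta)\xi/A_N$, which is strictly below $\xi/\beta$ because $A_N>(1-\theta)\beta$. I present the deviation argument as the main proof since it gives the stated threshold directly.

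The only delicate step is making sure the deviation utility is well defined, i.e.\ that we never fall into the convention $u_i(0,\dots,0)=0$ case ambiguity. This is handled by $N\ge 2$. One should also note that \eqref{eq:eq-utility-non-interior} is exactly the corner payoff: there $P=0$, so fee revenue vanishes and each validator gets $\tfrac1N\rho\,\price\,R$ minus its cost.
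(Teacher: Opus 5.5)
Your proof is correct and follows the paper's own argument. Assuming the corner profile $s^*=\xi/\beta$, its payoff $\frac{1}{N}\rho\,\price\,R-\frac{\kappa}{2}\left(\frac{\xi}{N\beta}\right)^2$ is strictly negative for $R<R_{\min}$, while deviating to $a_i=0$ yields $0$. Your extra checks are valid details the paper leaves implicit: $N\ge 2$ keeps aggregate supply positive after the deviation, and the square-root branch equals $(1-\theta)\xi/A_N<\xi/\beta$ at $R=0$.
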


\begin{proof}
    By contradiction, if $s^*(R,\price,\xi) \geq \frac{\xi}{\beta}$, then $u_i^*(x) = \frac{1}{N}\left( 0 + \rho \cdot \price\cdot R\right) - \frac{\kappa}{2}\cdot \left(\frac{\xi}{N \cdot \beta}\right)^2 < 0$. But validator \(i\) can deviate to \(a_i=0\), yielding utility \(0\).
Hence the boundary profile cannot be a Nash equilibrium.
Therefore the equilibrium must be interior.
\end{proof}

Note that for any interior equilibria and for $Q:=N \cdot (1-\theta) \cdot \xi$, we have  
\begin{align}
    \label{eq:s-star-rewritten-interior}
s^*(R,\price,\xi)
= & 
 \frac{Q + \sqrt{Q^2 + 4\cdot (\kappa + (N+1) \cdot (1-\theta)\cdot \beta) \cdot (N-1) \cdot \rho  \cdot \price \cdot  R}}{2(\kappa + (N+1) \cdot (1-\theta)\cdot \beta)} \ .
\end{align}

\begin{example}[No reserve]\label{ex:noreserve}
Consider \(R=0\), so validator rewards are funded entirely by fees. By
Corollary~\ref{cor:interior}, the equilibrium is interior. Hence, for $x = (0,\price,\xi)$ we have 
\begin{align}\label{eq:equilibrium-no-reserve}
s^*(x)
\stackrel{\eqref{eq:s-star-rewritten-interior}}{=}
\frac{N\cdot (1-\theta)\cdot \xi}{
\kappa+(N+1)\cdot (1-\theta)\cdot \beta
} \stackrel{\eqref{eq:A-B}}{=} \frac{(1-\theta) \cdot \xi}{A_N} \ .
\end{align}
Next, we calculate the utilities of the validators. By Corollary~\ref{cor:utilities-equilibrium},
with $R=0$
\begin{align*}
u_i^*(x)
= & 
\frac{1}{N}\cdot (1-\theta)\cdot(\xi-\beta \cdot s^*(x))\cdot s^*(x)
-
\frac{\kappa}{2}\cdot \left(\frac{s^*(x)}{N}\right)^2 
\\
\stackrel{\eqref{eq:equilibrium-no-reserve}}{=} & 
\frac{1}{N} \cdot s^*(x) \cdot (A_N -(1-\theta)\cdot \beta)\cdot s^*(x)
-
\frac{\kappa}{2}\cdot \left(\frac{s^*(x)}{N}\right)^2
\\
= & 
\left(\frac{s^*(x)}{N}\right)^2 \cdot (N \cdot A_N -N\cdot (1-\theta)\cdot \beta)
-
\frac{\kappa}{2}\cdot \left(\frac{s^*(x)}{N}\right)^2
\\
\stackrel{\eqref{eq:A-B}}{=} & 
\left(\frac{s^*(x)}{N}\right)^2 \cdot (\kappa + (N+1)\cdot (1-\theta) \cdot \beta -N\cdot (1-\theta)\cdot \beta)
-
\frac{\kappa}{2}\cdot \left(\frac{s^*(x)}{N}\right)^2
\\
= & 
\left(\frac{s^*(x)}{N}\right)^2 \cdot (\frac{\kappa}{2} +  (1-\theta) \cdot \beta ) > 0 \ . 
\end{align*}
Thus, when \(R=0\), the equilibrium is interior and each validator obtains
strictly positive utility.
\end{example}

The next result shows that equilibrium security increases when validator rewards become easier to finance. Higher demand raises current fee revenue, and a higher token price makes reserve payouts more valuable in external terms. By contrast, higher operating costs, stronger fee compression, or diverting a larger fraction  of the fees  into the reserve reduce the validators' current incentives to supply security and blockspace.

\begin{proposition}[Monotonicity]
\label{prop:monotone}
The equilibrium security level $s^*(R,\price,\xi)$ from \Cref{thm:unique-sne} is weakly increasing in $R$, in $\price$, and in $\xi$, and weakly decreasing in $\kappa$, in $\beta$, and in $\theta$.
\end{proposition}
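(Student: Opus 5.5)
The proof works directly from the closed form \eqref{eq:s-star} of \Cref{thm:unique-sne}. Write $s^*(x)=\min\{g(x),\xi/\beta\}$ with
\[
g := \frac{(1-\theta)\xi + \sqrt{(1-\theta)^2\xi^2 + 4A_N B_N \price R}}{2A_N}.
\]
The minimum of two functions that are each weakly increasing (resp.\ decreasing) in a parameter is again weakly increasing (resp.\ decreasing). So it suffices to check the monotonicity of $g$ and of $\xi/\beta$ separately in each parameter. The cap $\xi/\beta$ does not depend on $R,\price,\kappa,\theta$, is increasing in $\xi$, and is decreasing in $\beta$. It therefore has the required direction in every case.

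For $g$, the parameters $R$, $\price$ and $\xi$ are immediate. The numerator is a sum of $(1-\theta)\xi\ge 0$ and a square root whose argument is nondecreasing in $R$, $\price$ and $\xi$, since $A_N,B_N>0$ and $\xi>0$. The denominator $2A_N$ does not depend on these variables. For $\kappa$ and $\beta$, which enter only through $A_N$, which is increasing in both, I would rewrite $g$ as the positive root of $A_N s^2-(1-\theta)\xi s - B_N\price R=0$. This is the interior first-order condition behind \Cref{thm:unique-sne}, since $A_N s^2 - (1-\theta)\xi s - B_N \price R=0$ has exactly the root above. Equivalently, divide through by the conjugate:
\[
g=\frac{2B_N\price R}{\sqrt{(1-\theta)^2\xi^2+4A_NB_N\price R}-(1-\theta)\xi}\quad (R>0),
\]
or simply write $g = c + \sqrt{c^2 + d}$ with $c=(1-\theta)\xi/(2A_N)$ and $d=B_N\price R/A_N$. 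Both $c$ and $d$ are nonnegative and nonincreasing in $A_N$, and $(c,d)\mapsto c+\sqrt{c^2+d}$ is nondecreasing in each argument on $c,d\ge0$. Hence $g$ is nonincreasing in $A_N$, which gives the claims for $\kappa$ and $\beta$.

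The main obstacle is $\theta$. It enters both the fee term $(1-\theta)\xi$ in the numerator and $A_N$, and $A_N$ itself decreases as $\theta$ grows, so the direct comparison is not obvious. I would use the representation $g=c+\sqrt{c^2+d}$ with
\[
c=\frac{(1-\theta)\xi}{2(\kappa/N + \tfrac{N+1}{N}(1-\theta)\beta)},\qquad d=\frac{B_N\price R}{\kappa/N + \tfrac{N+1}{N}(1-\theta)\beta}.
\]
Set $u=1-\theta$. The map $u\mapsto u/(\kappa/N+\tfrac{N+1}{N}u\beta)$ has derivative proportional to $\kappa/N>0$, so $c$ is increasing in $u$, i.e.\ decreasing in $\theta$. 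However, $d$ is increasing in $\theta$, because $A_N$ decreases in $\theta$ while $B_N$ does not depend on $\theta$. So the two channels pull in opposite directions.

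I would therefore try to compute $\partial g/\partial\theta$ via implicit differentiation of $F(s,\theta)=A_N(\theta)s^2-(1-\theta)\xi s-B_N\price R=0$. This gives $\partial s/\partial\theta = -F_\theta/F_s$ with $F_s=2A_Ns-(1-\theta)\xi>0$ at the positive root. Also
\[
F_\theta=-\tfrac{N+1}{N}\beta s^2+\xi s = s\bigl(\xi-\tfrac{N+1}{N}\beta s\bigr).
\]
Monotone decrease in $\theta$ thus requires $F_\theta\ge0$, i.e.\ $s\le N\xi/((N+1)\beta)$. At $R=0$ this holds, since then $g=(1-\theta)\xi/A_N<N\xi/((N+1)\beta)$. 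For large $R$, however, $g$ may exceed this level. Above it the claim can only be rescued by the cap $\xi/\beta$, if at all.

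So I expect the honest route is a case split. Where $s^*=\xi/\beta$ locally, the derivative is zero. Where $s^*$ is interior with $s^*\le N\xi/((N+1)\beta)$, the sign argument above works. In the remaining band $N\xi/((N+1)\beta)<s^*<\xi/\beta$ I would check the sign directly. If it fails there, the statement for $\theta$ would have to be restricted to the fee-dominated regime, for instance $R$ below the level at which $g$ reaches $N\xi/((N+1)\beta)$. This includes $R<R_{\min}$ only if that holds, and I would verify that containment.
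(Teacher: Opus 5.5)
\textbf{The $\theta$ case.} Your arguments for $R$, $\price$, $\xi$, $\kappa$ and $\beta$ are correct and follow the paper's route: read monotonicity off the interior root, check that the cap $\xi/\beta$ moves the right way, and use that a minimum preserves weak monotonicity. Your representation $g=c+\sqrt{c^2+d}$ also supplies the justification for ``decreasing in $A_N$'' that the paper only asserts.

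The problem is $\theta$, and there you stop one step short. Your implicit differentiation already settles the question, and it settles it \emph{against} the statement. Since $F_s>0$ at the positive root, $\partial g/\partial\theta=-F_\theta/F_s$ has the sign of $-g\bigl(\xi-\tfrac{N+1}{N}\beta g\bigr)$. This is strictly positive whenever $g$ lies in the band $\tfrac{N}{N+1}\xi/\beta<g<\xi/\beta$, so nothing remains to be ``checked directly''. The band is actually reached: $g$ rises continuously in $R$ from $(1-\theta)\xi/A_N<\tfrac{N}{N+1}\xi/\beta$ to $+\infty$. On the band the cap is inactive and does not depend on $\theta$, so it cannot rescue the claim.

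A non-infinitesimal version of the same argument: $F(s,\theta')-F(s,\theta)=(\theta'-\theta)\,s\,\bigl(\xi-\tfrac{N+1}{N}\beta s\bigr)$. Hence for $\theta'>\theta$ with $g(\theta)$ in the band, $F(g(\theta),\theta')<0$, which forces $g(\theta')>g(\theta)$. A concrete counterexample: take $N=2$, $\kappa=\beta=\xi=\price=1$, $\rho=1/2$, $R=3/2$. At $\theta=0$ you get $s^*=3/4$. At $\theta=1/3$, $F(3/4,1/3)=-1/32<0$, so $s^*\approx 0.769>3/4$.

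So the $\theta$-part of the proposition is false and no proof can close it. The paper's own proof goes wrong at exactly this point. It says $\widehat{s}$ is decreasing in $A_N$, ``hence decreasing in $\kappa$, $\beta$, and $\theta$''. But $A_N$ is \emph{decreasing} in $\theta$, which is precisely the opposing channel you identified.

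\textbf{Finishing your case split.} Evaluate at $s_0:=\tfrac{N\xi}{(N+1)\beta}$. You get $F(s_0,\theta)=\tfrac{N\kappa\xi^2}{(N+1)^2\beta^2}-B_N\price R$, which does not depend on $\theta$. This gives a clean dichotomy.
\begin{itemize}
\item If $B_N\price R\le \tfrac{N\kappa\xi^2}{(N+1)^2\beta^2}$, then $g\le s_0$ for every $\theta\in[0,1)$, so $s^*$ is weakly decreasing in $\theta$.
\item Otherwise $g>s_0$ for every $\theta$. Then $g$ is strictly increasing in $\theta$, and $s^*$ is weakly increasing, strictly wherever it is interior.
\end{itemize}
A correct version of the proposition must restrict the $\theta$-claim to the first regime. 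Your containment question has a positive answer. Indeed $B_N\price R_{\min}=\tfrac{(N-1)\kappa\xi^2}{2N^2\beta^2}$, and $\tfrac{N-1}{2N^2}<\tfrac{N}{(N+1)^2}$ is equivalent to $(N-1)(N+1)^2<2N^3$, i.e.\ $N^3-N^2+N+1>0$, which always holds. So every $R<R_{\min}$ lies in the regime where $s^*$ is decreasing in $\theta$.
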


\subsection{Security Threshold}
In this section, we analyze the conditions under which the system sustains an equilibrium level of security (blockspace) above a threshold.
Let $\underline{s}>0$ denote the target security requirement.

\begin{definition}[Security-feasible state]
\label{def:security-feasible}
A state $(R,\price,\xi)$ is \emph{security feasible} for target $\underline{s}$ if
\[
s^*(R,\price,\xi)\ge \underline{s}.
\]
It is \emph{fee-only feasible} if the same inequality holds at $R=0$.
\end{definition}

The next theorem is the static core of the paper. It converts the equilibrium formula into a reserve threshold.

\begin{theorem}[Exact reserve threshold]
\label{thm:threshold}
For every target $\underline{s}>0$, define
\begin{equation}
\label{eq:threshold}
\SR(\price,\xi;\underline{s})
:=
\begin{cases}
\dfrac{\left[A_N\cdot \underline{s}^2-(1-\theta)\cdot \xi \cdot \underline{s}\right]_+}{B_N \cdot \price},
& \text{if } \underline{s}\le \xi/\beta,\\[1.2em]
\infty,
& \text{if } \underline{s}>\xi/\beta.
\end{cases}
\end{equation}
Then the following are equivalent:
\begin{enumerate}[label=(\roman*)]
\item $(R,\price,\xi)$ is security feasible for target $\underline{s}$.
\item $R\ge \SR(\price,\xi;\underline{s})$.
\end{enumerate}
Moreover, $(R,\price,\xi)$ is fee-only feasible if and only if $\SR(\price,\xi;\underline{s})=0$.
\end{theorem}

\Cref{thm:threshold} isolates the object that governance actually needs to monitor: not the reserve level alone, but the reserve level relative to a state-dependent threshold $\SR(\cdot)$. When demand is strong or the token price is high, the threshold falls because either current fees or reserve purchasing power are more favorable. When demand is too weak to support the target capacity, no reserve level can repair the shortfall. The threshold becomes infinite when the target security level $\underline{s}$ exceeds current demand capacity $\xi/\beta$ (recall that by \eqref{eq:symmetric-feasibility-bound} this ratio is the largest feasible $s$, i.e.,  the protocol cannot buy more secure throughput at any nonnegative fee).

\begin{remark}
The threshold $\SR(\cdot)$ is easiest to interpret geometrically. \Cref{fig:threshold-map} plots the exact threshold over the $(\price,\xi)$ plane for an illustrative parameterization. The figure makes visible the three economically distinct regions identified by \Cref{thm:threshold}: (i) an infeasible low-demand region, (ii)  a strictly reserve-dependent transition region, and (iii) a fee-only region where the threshold collapses to zero.
\end{remark}

\begin{figure}[tbp]
\centering
\includegraphics[width=\textwidth]{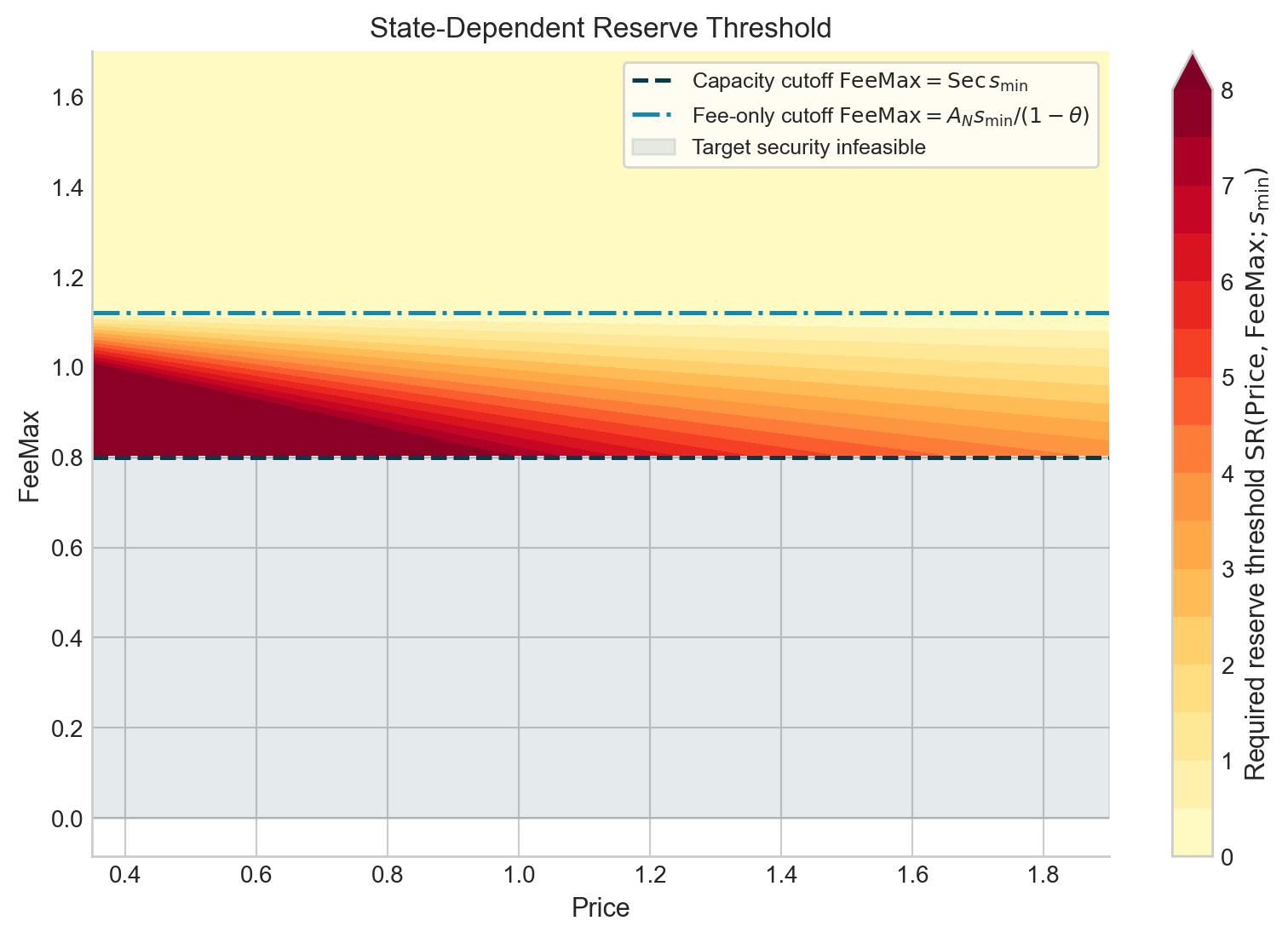}
\caption{Exact reserve threshold as a function of token price $\price$ and demand state $\xi$. For any point in the $(\price,\xi)$ plane, the color gives the minimal reserve stock required to sustain the target security level $\underline{s}$: darker regions require larger reserves. The gray lower region lies below the capacity cutoff $\xi=\beta\cdot \underline{s}$; there the target is infeasible at any finite reserve level because aggregate supply cannot be increased enough while keeping fees nonnegative. Above the fee-only cutoff $\xi=A_N\cdot \underline{s}/(1-\theta)$, the reserve threshold is zero because current fees alone can sustain the target. The figure uses the illustrative parameter values $N=5$, $\theta=0.25$, $\rho=0.04$, $\kappa=0.6$, $\beta=0.8$, and $\underline{s}=1$.}
\label{fig:threshold-map}
\end{figure}

\section{Dynamic Security Runway}\label{sec:dynamic}

The static threshold becomes dynamic once price and demand evolve over time and the reserve follows \Cref{eq:reserve-law}. This section formalizes the hand-off problem.

\subsection{Failure Time and Hand-Off Time}

Given the equilibrium of the policy from \Cref{thm:unique-sne}, the reserve evolution in \eqref{eq:reserve-law} becomes
\begin{equation}
\label{eq:equilibrium-reserve}
R_{t+1}
=
(1-\rho)R_t
+
\theta \frac{\big(\xi_t-\beta\cdot s^*(R_t,\price_t,\xi_t)\big)s^*(R_t,\price_t,\xi_t)}{\price_t}.
\end{equation}

\begin{definition}[Runway stopping times]
\label{def:stopping-times}
Fix a target security level $\underline{s}>0$. Define
\begin{align}
\tau_{\mathrm{fail}}
&:=
\inf\{t\ge 0: R_t < \SR(\price_t,\xi_t;\underline{s})\}, \label{eq:tau-fail}\\
\tau_{\mathrm{hand}}
&:=
\inf\{t\ge 0: \SR(\price_t,\xi_t;\underline{s})=0\}. \label{eq:tau-hand}
\end{align}
We say that the protocol achieves a \emph{successful hand-off} if
$
\tau_{\mathrm{hand}} < \tau_{\mathrm{fail}}.
$
\end{definition}

Next, we turn the hand-off problem into a hitting-time problem. Security lasts exactly as long as the reserve process remains inside the safe region, and the transition succeeds exactly when the protocol reaches the fee-only region before leaving that safe region.

\begin{proposition}[Exact runway criterion]
\label{prop:runway}
For every horizon $T\in \mathbb{N}$ and every sample path:
\begin{enumerate}[label=(\roman*)]
\item Security is maintained at all dates $t=0,\dots,T$ if and only if
\[
R_t \ge \SR(\price_t,\xi_t;\underline{s})
\qquad\text{for all } t=0,\dots,T.
\]
\item A successful hand-off occurs if and only if the process reaches the fee-only region before the failure region, that is,
\[
\tau_{\mathrm{hand}} < \tau_{\mathrm{fail}}.
\]
\end{enumerate}
\end{proposition}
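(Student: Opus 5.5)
Both parts reduce to \Cref{thm:threshold} applied pathwise, state by state, together with the definitions of the stopping times in \Cref{def:stopping-times}. Throughout, we fix an arbitrary sample path $\omega$. Along it, the states $x_t=(R_t,\price_t,\xi_t)$ are generated by \eqref{eq:equilibrium-reserve}, where $s^*$ is the unique symmetric equilibrium of \Cref{thm:unique-sne}. The statement ``security is maintained at date $t$'' is read as $s^*(R_t,\price_t,\xi_t)\ge \underline{s}$, that is, $x_t$ is security feasible in the sense of \Cref{def:security-feasible}. Since $R_t\ge 0$ and $\price_t,\xi_t>0$ by \Cref{ass:state} and by the form of the reserve law, every $x_t$ lies in the domain $\R_+\times\R_{++}^2$ of \Cref{thm:unique-sne,thm:threshold}.

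For part (i), we apply the equivalence (i)$\Leftrightarrow$(ii) of \Cref{thm:threshold} at each date $t\in\{0,\dots,T\}$ separately. This gives $s^*(x_t)\ge\underline{s}$ if and only if $R_t\ge \SR(\price_t,\xi_t;\underline{s})$, including the case $\SR=\infty$, where both sides are false. Taking the conjunction over the finitely many dates yields the claim. No dynamics enter here: the reserve law only determines which $R_t$ is realized, and the equivalence holds for any value of $R_t$.

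For part (ii), we first interpret the claim: ``reaches the fee-only region before the failure region'' is, by construction, the event $\{\tau_{\mathrm{hand}}<\tau_{\mathrm{fail}}\}$. So the content is that this event is the intended notion of a secure hand-off, namely that there is a date $t$ with $\SR(\price_t,\xi_t;\underline{s})=0$ such that security held at all dates $0,\dots,t$. The plan is to prove both directions.

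If $\tau_{\mathrm{hand}}<\tau_{\mathrm{fail}}$, then $\tau_{\mathrm{hand}}$ is finite. At $t=\tau_{\mathrm{hand}}$ the threshold is zero, so that state is fee-only feasible by the last clause of \Cref{thm:threshold}. For every $u\le t<\tau_{\mathrm{fail}}$ we have $R_u\ge\SR(\price_u,\xi_u;\underline{s})$ by the definition of the infimum, so part (i) with $T=t$ gives security up to $t$.

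Conversely, suppose such a date $t$ exists. Then $\tau_{\mathrm{hand}}\le t$, and security on $\{0,\dots,t\}$ gives $\tau_{\mathrm{fail}}>t$ by part (i). Hence $\tau_{\mathrm{hand}}<\tau_{\mathrm{fail}}$, with the convention $\inf\emptyset=\infty$ and the event requiring $\tau_{\mathrm{hand}}<\infty$.

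The only delicate point is a bookkeeping one. At the hand-off date itself we have $\SR=0\le R_t$ automatically, so failure cannot occur at that same date. This rules out ties $\tau_{\mathrm{hand}}=\tau_{\mathrm{fail}}<\infty$ and makes the strict inequality consistent. Beyond that, the argument is a pathwise unfolding of the stopping-time definitions, and no probabilistic estimate is needed.
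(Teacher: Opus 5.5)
Your proof is correct and follows the paper's argument: part (i) is \Cref{thm:threshold} applied date by date, and part (ii) is the definition of a successful hand-off restated through the stopping times. You also unpack (ii) further, showing it is equivalent to the existence of a zero-threshold date with security at all earlier dates, and noting that $R_t\ge 0=\SR$ at the hand-off date rules out finite ties. This goes beyond the paper's one-line proof but is sound.
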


\begin{proof}
Part (i) follows directly from \Cref{thm:threshold} applied state by state. Part (ii) is the definition of successful hand-off expressed in terms of the stopping times in Definition~\ref{def:stopping-times}.
\end{proof}

\subsection{A Conservative Stress-Test Bound}

The exact criterion is pathwise. For protocol design one often wants a finite-horizon sufficient condition that can be checked before the system is launched. The next result gives such a bound.

\begin{lemma}[Pure-decay lower bound]
\label{lem:pure-decay}
Along every sample path and for every $t\ge 0$,
\[
R_t \ge (1-\rho)^t \cdot R_0.
\]
\end{lemma}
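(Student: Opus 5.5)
The plan is a one-step inequality followed by induction on $t$, applied along a fixed sample path. We use the equilibrium reserve law \eqref{eq:equilibrium-reserve}, in which $s^*(R_t,\price_t,\xi_t)$ is the aggregate security from \Cref{thm:unique-sne}.

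\textbf{Step 1: the inflow term is nonnegative.} We show that
\[
\theta \cdot \frac{\big(\xi_t-\beta\cdot s^*(R_t,\price_t,\xi_t)\big)\cdot s^*(R_t,\price_t,\xi_t)}{\price_t} \ge 0
\]
at every date. Each factor is nonnegative:
\begin{itemize}
\item $\theta\in[0,1)$.
\item $\price_t>0$ by \Cref{ass:state}.
\item $s^*\ge 0$, since it is a sum of nonnegative actions. This can also be read off the closed form \eqref{eq:s-star}: the minimum is of a nonnegative quadratic-root expression and $\xi_t/\beta>0$.
\item $\xi_t-\beta\cdot s^*\ge 0$. This holds because \eqref{eq:s-star} caps $s^*$ at $\xi_t/\beta$, or equivalently by the feasibility bound \eqref{eq:symmetric-feasibility-bound}, which forces $s_t\le \xi_t/\beta$.
\end{itemize}
Substituting into \eqref{eq:equilibrium-reserve} gives the one-step bound $R_{t+1}\ge (1-\rho)\cdot R_t$.

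\textbf{Step 2: induction.} The base case $t=0$ is the identity $R_0\ge R_0$. For the inductive step, assume $R_t\ge(1-\rho)^t R_0$. Since $1-\rho>0$ (because $\rho\in(0,1)$), multiplying by $1-\rho$ preserves the inequality, so
\[
R_{t+1}\ge(1-\rho)\cdot R_t\ge(1-\rho)^{t+1}\cdot R_0 .
\]
Nonnegativity of $R_t$ is inherited along the way, so every state stays in the domain $\R_+\times\R_{++}^2$ where \Cref{thm:unique-sne} applies.

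\textbf{Main obstacle.} The only point needing care is justifying $\xi_t-\beta s^*\ge 0$, since that is what rules out a negative fee inflow. It follows directly from the $\min$ with $\xi/\beta$ in \eqref{eq:s-star}. Nothing else in the argument is delicate, because the bound holds pathwise for any realization of price and demand.
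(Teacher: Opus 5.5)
Your proof is correct and is the paper's argument: the paper notes that the fee inflow in \eqref{eq:equilibrium-reserve} is nonnegative, deduces $R_{t+1}\ge(1-\rho)R_t$, and iterates. You add the factor-by-factor justification of nonnegativity (notably $\xi_t-\beta s^*\ge 0$ from the cap in \eqref{eq:s-star}) and write the induction out explicitly, which the paper leaves implicit.
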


\begin{proof}
The reserve recursion in \Cref{eq:equilibrium-reserve} has a nonnegative inflow term. Hence
$
R_{t+1}\ge (1-\rho)\cdot R_t.
$
Iterating yields the claim.
\end{proof}

\Cref{lem:pure-decay} is the worst-case reserve benchmark: even if future fee inflows are ignored entirely, the reserve cannot decay faster than pure payout at rate $\rho$. This simple lower bound is what makes conservative stress testing possible.

The next theorem provides a robust sufficient condition: if the reserve is large enough to survive a pessimistic lower envelope for price and demand, then the protocol is safe on every path inside that envelope. This form is useful for ex ante certification and governance stress tests.

\begin{theorem}[Finite-horizon stress-test guarantee]
\label{thm:stress-test}
Fix a horizon $T\in \mathbb{N}$ and deterministic lower envelopes
$(\underline{\price}_t)_{t=0}^T$ and
$(\underline{\xi}_t)_{t=0}^T.$
Define the event
$$E_T := \{\price_t \ge \underline{\price}_t,\ \xi_t \ge \underline{\xi}_t \ \ \  \text{ for all } t=0,\dots,T\}.$$
If
\begin{equation}
\label{eq:stress-bound}
(1-\rho)^t R_0 \ge \SR(\underline{\price}_t,\underline{\xi}_t;\underline{s})
\qquad \text{for all } t=0,\dots,T,
\end{equation}
then event $E_T$ implies that failure occurs after $T$, that is, 
$\tau_{\mathrm{fail}} > T.$
Consequently, for any $\alpha \in [0,1]$, if $\Prob(E_T)\ge 1-\alpha$, then
$\Prob(\tau_{\mathrm{fail}}>T)\ge 1-\alpha.$
\end{theorem}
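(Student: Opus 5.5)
The argument is pathwise. It chains three earlier ingredients: the pure-decay bound of \Cref{lem:pure-decay}, the hypothesis \eqref{eq:stress-bound}, and a monotonicity property of the threshold $\SR$ in its state arguments. Fix a sample path in $E_T$ and a date $t\le T$. The goal is
\[
R_t \;\ge\; (1-\rho)^t R_0 \;\ge\; \SR(\underline{\price}_t,\underline{\xi}_t;\underline{s}) \;\ge\; \SR(\price_t,\xi_t;\underline{s}).
\]
By \Cref{def:stopping-times}, this gives $R_t\ge \SR(\price_t,\xi_t;\underline{s})$ for all $t\le T$, hence $\tau_{\mathrm{fail}}>T$.

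The first inequality is \Cref{lem:pure-decay}, and the second is the hypothesis. The third is the substantive step. I need $\SR(\cdot,\cdot;\underline{s})$ to be weakly decreasing in $\price$ and in $\xi$, and I would verify this directly from \eqref{eq:threshold}, splitting into cases.

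\begin{itemize}
\item \emph{Case $\underline{s}>\underline{\xi}_t/\beta$.} The right-hand side $\SR(\underline{\price}_t,\underline{\xi}_t;\underline{s})$ equals $\infty$. Since $(1-\rho)^tR_0$ is finite, \eqref{eq:stress-bound} cannot hold. This case is therefore vacuous under the hypothesis.
\item \emph{Case $\underline{s}\le\underline{\xi}_t/\beta$.} Because $\xi_t\ge\underline{\xi}_t$, also $\underline{s}\le \xi_t/\beta$, so both thresholds lie in the finite branch. The numerator $[A_N\underline{s}^2-(1-\theta)\xi\underline{s}]_+$ is nonincreasing in $\xi$, since $1-\theta>0$ and $\underline{s}>0$. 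The denominator $B_N\price$ is increasing in $\price$, with $B_N>0$ because $N\ge2$ and $\rho>0$. Hence increasing either argument weakly lowers the ratio.
\end{itemize}

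As an alternative route, one could invoke \Cref{thm:threshold} together with \Cref{prop:monotone}. Feasibility of $(R,\underline{\price}_t,\underline{\xi}_t)$ together with the monotonicity of $s^*$ in $\price$ and $\xi$ gives feasibility at $(R,\price_t,\xi_t)$. I would use the direct formula check, since it is cleaner.

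For the probabilistic statement, the pathwise implication shows $E_T\subseteq\{\tau_{\mathrm{fail}}>T\}$. Monotonicity of $\Prob$ then gives $\Prob(\tau_{\mathrm{fail}}>T)\ge\Prob(E_T)\ge 1-\alpha$.

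The main obstacle is the monotonicity step, in particular keeping track of the infinite branch: exceeding the lower envelope in $\xi$ must never move the state into the $\infty$ regime. The case analysis above handles this. Everything else is immediate from earlier results.
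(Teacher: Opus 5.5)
Your proof is correct and is essentially the paper's argument. The paper proves the same chain $R_t\ge(1-\rho)^tR_0\ge \SR(\underline{\price}_t,\underline{\xi}_t;\underline{s})\ge \SR(\price_t,\xi_t;\underline{s})$ on $E_T$, using \Cref{lem:pure-decay}, the hypothesis, and monotonicity of the threshold, and then concludes from $E_T\subseteq\{\tau_{\mathrm{fail}}>T\}$; your case check of the monotonicity of $\SR$, including the remark that the hypothesis excludes the infinite branch, fills in a step the paper simply asserts from \eqref{eq:threshold}.
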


\begin{remark}
\Cref{thm:stress-test} is intended for protocol governance. One may estimate lower confidence bands for $\price_t$ and $\xi_t$, plug them into \Cref{eq:stress-bound}, and obtain a conservative reserve requirement for a target horizon $T$. The bound is conservative because it ignores future reserve inflows from fees; it treats the reserve as if it only decayed. That makes it suitable for adverse-scenario certification.
\end{remark}

\subsection{Fee-Only Region and Long-Run Design}
We identify the economic moment at which the reserve becomes redundant, that is, current demand is  enough to finance the target security level. In this case, reserve policy no longer determines feasibility.

\begin{corollary}[Fee-only region]
\label{cor:fee-only}
Suppose $\underline{s}\le \xi/\beta$. Then fee-only feasibility at target $\underline{s}$ holds if and only if
\begin{equation}
\label{eq:fee-only-region}
(1-\theta)\cdot \xi \ge A_N \cdot  \underline{s}.
\end{equation}
\end{corollary}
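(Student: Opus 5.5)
The plan is to derive the corollary directly from \Cref{thm:threshold}, which states that $(R,\price,\xi)$ is fee-only feasible if and only if $\SR(\price,\xi;\underline{s})=0$. Since we assume $\underline{s}\le \xi/\beta$, the first branch of \eqref{eq:threshold} applies, so
\[
\SR(\price,\xi;\underline{s})=\frac{\left[A_N\cdot \underline{s}^2-(1-\theta)\cdot \xi\cdot \underline{s}\right]_+}{B_N\cdot \price}.
\]
The denominator $B_N\cdot\price$ is strictly positive because $N\ge 2$, $\rho>0$ and $\price>0$. Hence the threshold vanishes exactly when the positive part vanishes, that is, when $A_N\underline{s}^2-(1-\theta)\xi\underline{s}\le 0$. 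Dividing by $\underline{s}>0$ turns this into $(1-\theta)\xi\ge A_N\underline{s}$, which is \eqref{eq:fee-only-region}.

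As an independent check that does not lean on the threshold theorem, I would also verify the claim from \Cref{ex:noreserve}. At $R=0$ the equilibrium is interior and $s^*(0,\price,\xi)=(1-\theta)\xi/A_N$. By \Cref{def:security-feasible}, fee-only feasibility means $s^*(0,\price,\xi)\ge\underline{s}$, and rearranging gives \eqref{eq:fee-only-region} at once. This route shows that the hypothesis $\underline{s}\le\xi/\beta$ is needed only to land in the finite branch of $\SR$. It is consistent with the direct computation because $A_N>(1-\theta)\beta$, so $(1-\theta)\xi/A_N<\xi/\beta$ automatically.

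I expect no real obstacle here. The only care needed is to check that the denominators $B_N\price$ and $\underline{s}$ are strictly positive, so that the divisions preserve the direction of the inequality, and to note that the case $\underline{s}>\xi/\beta$ has been excluded by hypothesis.
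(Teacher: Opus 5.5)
Your main argument is correct and is the paper's proof: apply \Cref{thm:threshold} to reduce fee-only feasibility to $\SR(\price,\xi;\underline{s})=0$, use $\underline{s}\le\xi/\beta$ to select the finite branch, and divide $A_N\underline{s}^2-(1-\theta)\xi\underline{s}\le 0$ by $\underline{s}>0$. Your independent check via $s^*(0,\price,\xi)=(1-\theta)\xi/A_N$ is also valid, and because $A_N>(1-\theta)\beta$ it shows the equivalence holds even without the hypothesis $\underline{s}\le\xi/\beta$.
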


\begin{proof}
By \Cref{thm:threshold}, fee-only feasibility is equivalent to $\SR(\price,\xi;\underline{s})=0$. Under $\underline{s}\le \xi/\beta$, this is equivalent to
\[
A_N\cdot \underline{s}^2-(1-\theta)\xi \cdot \underline{s}\le 0.
\]
Because $\underline{s}>0$, division by $\underline{s}$ yields \Cref{eq:fee-only-region}.
\end{proof}

\begin{remark}
\Cref{eq:fee-only-region} says that fee-only sustainability is governed by demand, \emph{not} by token price. This is because current transaction fees paid by  users are already measured in the external numeraire. Token-price risk matters only because the reserve is a stock of tokens (carried from the past).
\end{remark}

\section{Dynamic Validator Participation: A Markov Extension}\label{sec:Markov}

The baseline model in considered so far is  myopic: At state $X_t$, validators choose current participation to maximize current payoff only. Intuitively, this is what makes the reserve threshold explicit. In this section, we consider a fully strategic dynamic extension that endogenizes continuation values while preserving a precise equilibrium concept.

There are two technical issues. First, once validators become forward looking, a deviation at time $t$ changes the future reserve stock and therefore future incentives. Second, proving equilibrium existence in the original continuous-state model requires a substantial stochastic-game apparatus that would obscure the runway results. We therefore proceed in two steps. We first formulate a finite-horizon finite-state Markov game and prove existence of a Markov perfect equilibrium. We then return to the continuous model and derive the exact first-order condition that any differentiable pure Markov equilibrium must satisfy.

\subsection{Finite-Horizon Markov Game}

Fix a horizon $T\in \mathbb{N}$ and a discount factor $\delta \in (0,1)$.

\begin{assumption}[Finite-state Markov extension]
\label{ass:finite-extension}
There is a finite set of exogenous states
\[
\mathcal{Z}=\{z^1,\dots,z^M\}\subset \R_{++}^2,
\qquad
z^m=(\price^m,\xi^m),
\]
and a Markov transition matrix $P=(\fee_{mn})_{m,n=1}^M$ on $\mathcal{Z}$.
There is also a finite reserve grid
\[
\mathcal{R}=\{R^1,\dots,R^L\}\subset \R_+,
\qquad
0=R^1<\cdots<R^L=\bar R,
\]
with $\price^{\min}:=\min_{1\le m\le M} \price^m, \ \xi^{\max}:=\max_{1\le m\le M} \xi^m$ and
\begin{align*}
\bar R \ge \max\left\{R_0,\frac{\theta \cdot(\xi^{\max})^2}{4\beta\cdot \rho \cdot \price^{\min}}\right\}. 
\end{align*}
At state $x=(R,z)=(R,\price,\xi)\in \mathcal{R}\times\mathcal{Z}$, each validator chooses an action in the compact interval
\[
\mathcal{A}(x):=\left[0,\frac{\xi}{N\beta}\right].
\]
The next-period reserve is computed by first determining the intermediate value
\[
\widehat{R}'(x,a)
:=
(1-\rho)\cdot R + \theta \cdot\frac{(\xi-\beta\cdot s)\cdot s}{\price},
\qquad
s := \sum_{i=1}^N a_i,
\]
and subsequently projecting this value onto the discrete reserve grid via
\[
\Pi_{\mathcal{R}}(r) := \max\{R^\ell \in \mathcal{R} : R^\ell \le r\}.
\]
Accordingly, the next-period reserve is given by
\[
R'(x,a) := \Pi_{\mathcal{R}}(\widehat{R}'(x,a)).
\]
\end{assumption}

\begin{remark}
\Cref{ass:finite-extension} is a computational extension of the baseline model. The exogenous Markov chain is a finite approximation of the original price-demand process, and the reserve grid is the standard discretization used in dynamic policy computation. The bound on $\bar R$ guarantees that the reserve state is invariant: because $(\xi-\beta\cdot s)\cdot s\le \xi^2/(4\beta)$ and $\price\ge \price^{\min}$,
\[
\widehat{R}'(x,a)
\le
(1-\rho)\cdot \bar R + \frac{\theta \cdot (\xi^{\max})^2}{4\beta\cdot \price^{\min}}
\le
\bar R.
\]
\end{remark}

\begin{definition}[Markov strategy and Markov perfect equilibrium]
\label{def:mpe}
For each date $t\in\{0,\dots,T\}$ and each validator $i$, a (mixed) \emph{Markov strategy} is a mapping
\[
\sigma_t^i:\mathcal{R}\times\mathcal{Z}\to \Delta(\mathcal{A}(x)),
\]
where $\Delta(\mathcal{A}(x))$ denotes the set of Borel probability measures on $\mathcal{A}(x)$.

Given a strategy profile $\sigma=(\sigma_t^i)_{i,t}$, define continuation values recursively by
\[
V_{i,T+1}^\sigma(x):=0,
\]
and, for $t=T,T-1,\dots,0$,
 \begin{align}
V^\sigma_{i,t}(x)
=
\mathbb{E}_{a_1\sim \sigma_t^1(\cdot\mid x),\,\dots,\,a_N\sim \sigma_t^N(\cdot\mid x)}
\left[
u_i(a;x)
+
\delta \sum_{z'\in Z} P(z,z')\,V^\sigma_{i,t+1}\!\left(R'(x,a),z'\right)
\right].\label{eq:dynamic-value}
\end{align}

A strategy profile $\sigma$ is a \emph{Markov perfect equilibrium (in mixed strategies)} if for every date $t$, every state $x$, every validator $i$, and every alternative mixed action $\mu \in \Delta(\mathcal{A}(x))$,
\begin{align}
& \mathbb{E}_{a_1\sim \sigma_t^1(\cdot\mid x),\,\dots,\,a_N\sim \sigma_t^N(\cdot\mid x)}
\left[
u_i(a;x)
+
\delta \sum_{z'\in Z} P(z,z')\,V^\sigma_{i,t+1}\!\left(R'(x,a),z'\right)
\right]
\nonumber\\
&\qquad\qquad\qquad\ge
\mathbb{E}_{\substack{a_i\sim \mu\\ a_j\sim \sigma_t^j(\cdot\mid x),\, j\neq i}}
\left[
u_i(a;x)
+
\delta \sum_{z'\in Z} P(z,z')\,V^\sigma_{i,t+1}\!\left(R'(x,a),z'\right)
\right].
\label{eq:mpe-ineq}
\end{align}
\end{definition}

The next theorem shows that once the state space is discretized and the horizon is finite, forward-looking validator behavior can still be analyzed with a well-defined equilibrium concept.
\begin{theorem}[Existence of finite-horizon Markov perfect equilibrium]
\label{thm:mpe-existence}
Under \Cref{ass:finite-extension}, the finite-horizon dynamic validator game admits a Markov perfect equilibrium in mixed strategies.
\end{theorem}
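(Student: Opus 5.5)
We will argue by backward induction on the date $t$, building the equilibrium one stage at a time. At each stage we invoke Glicksberg's existence theorem for mixed equilibria in games with compact action sets and continuous payoffs. The state space $\mathcal{R}\times\mathcal{Z}$ is finite, so at each date we only need an equilibrium of finitely many one-shot games, one for each state $x$. The remark following \Cref{ass:finite-extension} shows that the projected reserve stays in $\mathcal{R}$, so the recursion \eqref{eq:dynamic-value} is well defined.

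\textbf{Induction.} At date $T$ the continuation $V_{i,T+1}=0$, so the stage game at $x$ is the static game with payoffs $u_i(a;x)$. Suppose equilibrium strategies $\sigma_{t+1},\dots,\sigma_T$ have been fixed, with induced values $V_{i,t+1}(\cdot)$. These are bounded functions on a finite set. At date $t$ and state $x=(R,z)$ we define the auxiliary game with action sets $\mathcal{A}(x)$ and payoffs
\[
g_i(a;x) := u_i(a;x) + \delta\sum_{z'}P(z,z')\,V_{i,t+1}\bigl(R'(x,a),z'\bigr).
\]
Any mixed Nash equilibrium of this game, chosen for each $x$, defines $\sigma_t$. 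By the one-shot deviation principle, which is immediate here because strategies are Markov and the horizon is finite, the resulting profile satisfies \eqref{eq:mpe-ineq} at every date and state.

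\textbf{Main obstacle: continuity.} Glicksberg's theorem needs continuous $g_i$, and two terms fail this.

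First, $u_i$ is discontinuous at the origin $a=0$, because of the $a_i/s$ share with $\rho\price R>0$. For a mixed-strategy existence result we can instead appeal to Dasgupta--Maskin or Reny better-reply security. The utility $u_i$ is bounded, and its discontinuity occurs only at the all-zero profile. There a deviating player gains by moving to a small positive $a_i$, which gives payoff payoff security. The sum of payoffs is also upper semicontinuous there.

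Second, the projection $\Pi_{\mathcal{R}}$ is a step function. So $a\mapsto V_{i,t+1}(R'(x,a),z')$ is piecewise constant with jumps on level sets of $\widehat R'$, and it is not continuous.

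The cleanest route, which I would try first, is to verify the hypotheses of Reny's theorem for each stage game: compact actions, bounded payoffs, and better-reply security of the mixed extension. Payoff security should follow because $\Pi_{\mathcal R}$ is right-continuous in $r$ (floor projection), hence upper semicontinuous in $r$.

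If that verification becomes messy, the fallback is a limiting argument. We replace $\Pi_{\mathcal R}$ by a continuous interpolation and $u_i$ by a smoothed share rule, obtain Glicksberg equilibria, and take a weak-* limit of the mixed strategies using compactness of $\Delta(\mathcal{A}(x))$. We then check that no profitable deviation survives in the limit. That check is exactly where the semicontinuity properties above are needed.

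Once each stage game has an equilibrium, finiteness of the state space and of the horizon completes the construction. The induced $V_{i,t}$ are bounded, since $|u_i|$ is bounded on the compact feasible set ($R\le\bar R$, $\price\ge\price^{\min}$), so the induction continues down to $t=0$.
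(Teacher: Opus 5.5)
Your architecture matches the paper's: backward induction over the finitely many state--date pairs, a one-shot game with payoff $g_{i,t}(x,\cdot)$ at each $(t,x)$, an equilibrium selection, and the one-shot-deviation check. The paper closes the stage step by asserting that $u_i(\cdot;x)$ and $R'(x,\cdot)$ are continuous in $a$ and then invoking Glicksberg. You are right that both assertions are false: the share $a_i/s$ is discontinuous at $a=0$ when $R>0$, and the floor projection $\Pi_{\mathcal{R}}$ is a step function. But having discarded that step, you do not actually replace it. Existence of a mixed equilibrium in each discontinuous stage game is the only nontrivial content of the theorem, and you leave it as ``should follow'' or a ``fallback''. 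The facts you cite for it are also wrong or insufficient.

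\emph{The origin.} The sum of payoffs is \emph{not} upper semicontinuous at the origin. For $R>0$ and $s>0$,
$\sum_i u_i(a;x)=(1-\theta)\cdot(\xi-\beta\cdot s)\cdot s+\rho\cdot\price\cdot R-\frac{\kappa}{2}\sum_i a_i^2\to\rho\cdot\price\cdot R>0=\sum_i u_i(0;x)$.
So the Dasgupta--Maskin hypothesis, and Reny's reciprocal upper semicontinuity, fail there.

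\emph{The continuation term.} Upper semicontinuity of $\Pi_{\mathcal{R}}$ passes to $a\mapsto V_{i,t+1}(\Pi_{\mathcal{R}}(\widehat{R}'(x,a)),z')$ only if $V_{i,t+1}(\cdot,z')$ is nondecreasing on the grid. You have not shown this, and nothing in the construction guarantees it, since later-date equilibria are arbitrary selections. Suppose $V_{i,t+1}(R^{\ell+1},z')<V_{i,t+1}(R^{\ell},z')$. Then the continuation drops exactly when $\widehat{R}'$ reaches $R^{\ell+1}$. A player may want $s$ arbitrarily close to that point but below it, so best replies need not exist.

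\emph{Payoff and better-reply security.} Even with monotone values, upper semicontinuity is not payoff security; upward jumps are exactly what destroy it. A small move by the others can push $s$ across a jump that no fixed own action undoes. Two examples: $a_i=0$ on the decreasing branch $s>\xi/(2\beta)$ of $\widehat{R}'$, or an isolated spike at $s=\xi/(2\beta)$ when $\max_s\widehat{R}'$ is a grid point. Moreover, Reny's mixed-strategy theorem needs better-reply security of the \emph{mixed extension}, which you never check. For $N\ge 3$ the discontinuity sets $\{a:\sum_j a_j=c\}$ are also not of Dasgupta--Maskin form.

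\emph{The fallback.} The smoothing-and-limit argument is the whole difficulty, not a routine check. Limits of equilibria of smoothed games are in general equilibria only of a game whose payoffs are modified on the discontinuity set. That is a Simon--Zame sharing rule, which here would be a lottery between adjacent grid points at ties. It is not the floor-projection game of \Cref{ass:finite-extension}.

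A clean repair is to make the transition continuous, for instance by randomizing between adjacent grid points with weights linear in $\widehat{R}'$, and to treat the origin separately. As written, neither your argument nor the paper's continuity claim establishes the stage-game existence step.
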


The proof of the above theorem is a backward-induction argument: each date-$t$ problem becomes an ordinary continuation game once later continuation values are fixed.

The next result says that dynamic multiplicity does not arise if every continuation game is already pinned down locally. Once each state-date problem has a unique symmetric pure best-response fixed point, backward induction propagates that uniqueness through the full dynamic game.

\begin{proposition}[Purity and uniqueness under state-by-state uniqueness]
\label{prop:mpe-uniqueness}
Define the date-$t$ continuation-game payoff by
\begin{equation}
\label{eq:continuation-game-payoff}
g_{i,t}(x,a)
:=
u_i(a;x)
+
\delta \sum_{z' \in \mathcal{Z}}
P(z,z')V_{i,t+1}^\sigma\big((R'(x,a),z')\big).
\end{equation}
Suppose that, for every date $t\in\{0,\dots,T\}$ and every state $x\in \mathcal{R}\times\mathcal{Z}$, the continuation game with payoff \Cref{eq:continuation-game-payoff} admits a unique symmetric pure Nash equilibrium action $a_t^*(x)\in \mathcal{A}(x)$. Then the dynamic validator game admits a unique symmetric pure Markov perfect equilibrium, namely the profile that assigns action $a_t^*(x)$ to every validator at every state-date pair.
\end{proposition}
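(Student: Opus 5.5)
The plan is backward induction on $t=T,T-1,\dots,0$, carrying the inductive claim that the continuation values $V_{i,t+1}$ are \emph{uniquely determined}, independent of which symmetric pure Markov perfect equilibrium is being considered, and are given by the candidate profile. At $t=T$ the continuation term vanishes, since $V_{i,T+1}\equiv 0$. The date-$T$ game at each state $x$ is therefore the continuation game with payoff $g_{i,T}(x,a)=u_i(a;x)$. By hypothesis this game has a unique symmetric pure Nash equilibrium action $a_T^*(x)$. So any symmetric pure MPE must prescribe $a_T^*(x)$ at $(T,x)$, and assigning it is an equilibrium of the stage game. This pins down $V_{i,T}(x)=u_i(a_T^*(x),\dots,a_T^*(x);x)$ at every grid state $x\in\mathcal{R}\times\mathcal{Z}$.

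For the inductive step, I will assume that at all dates $t'>t$ every symmetric pure MPE prescribes $a_{t'}^*(\cdot)$, so $V^\sigma_{i,t+1}$ is the same function for all such equilibria. Then the payoff $g_{i,t}(x,a)$ in \Cref{eq:continuation-game-payoff} is a well-defined function of $(x,a)$ alone. The one-shot condition \Cref{eq:mpe-ineq} at $(t,x)$, restricted to symmetric pure profiles, says exactly that the prescribed action is a symmetric pure Nash equilibrium of this continuation game. Three facts follow:
\begin{itemize}
\item \emph{Uniqueness:} the prescribed action must be $a_t^*(x)$.
\item \emph{Existence:} assigning $a_t^*(x)$ satisfies \Cref{eq:mpe-ineq}.
\item \emph{Continuation values:} the values $V_{i,t}(x)=g_{i,t}(x,a_t^*(x),\dots,a_t^*(x))$ are again uniquely determined.
\end{itemize}
Because $R'(x,a)$ is projected onto $\mathcal{R}$ and the remark after \Cref{ass:finite-extension} shows reserve invariance, next-period states stay in the finite grid. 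Hence every $V_{i,t+1}$ evaluation lies inside the domain on which the induction hypothesis applies.

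Existence of the full profile then follows by assembling the date-by-date choices. By construction, the profile satisfies the equilibrium inequalities at every $(t,x)$ against pure deviations. I must check that these inequalities also hold against mixed deviations $\mu\in\Delta(\mathcal{A}(x))$. This is immediate: the deviator's payoff against fixed pure opponents is linear in $\mu$, and it is bounded above by the best pure deviation.

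The main obstacle is the scope of the uniqueness claim. As stated, the hypothesis concerns continuation values $V^\sigma_{i,t+1}$ that depend on $\sigma$, which makes the argument look circular. I resolve this by reading the statement as uniqueness within symmetric pure Markov profiles and letting the induction fix $V_{i,t+1}$ before invoking the hypothesis at date $t$. Mixed or asymmetric equilibria are not excluded by a state-by-state symmetric uniqueness hypothesis, so I will state explicitly that uniqueness is claimed only in the symmetric pure class. I will also remark on why symmetry survives the induction: the continuation game at each date is itself symmetric, because $u_i$ is symmetric and $V_{i,t+1}$ is common across validators under a symmetric profile. This keeps the inductive framework consistent.
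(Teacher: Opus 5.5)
Your proposal is correct and follows essentially the same backward-induction argument as the paper: once the later actions $a_{t+1}^*,\dots,a_T^*$ are pinned down, the continuation values in $g_{i,t}$ are fixed, and the state-by-state uniqueness hypothesis then forces the date-$t$ action to be $a_t^*(x)$. Your additional points (checking the one-shot inequality against mixed deviations, reading the hypothesis recursively to remove the apparent circularity in $V^\sigma_{i,t+1}$, and limiting the uniqueness claim to the symmetric pure class) only make explicit what the paper's short proof leaves implicit.
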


\begin{example}[Illustrating Proposition~\ref{prop:mpe-uniqueness}]
We construct a simple finite-horizon example in which the hypothesis of Proposition~\ref{prop:mpe-uniqueness} is satisfied.
Consider the finite-horizon Markov game with
\[
N=2,
\qquad
T<\infty,
\qquad
\delta\in(0,1),
\]
and let the exogenous state space \(Z\) and transition matrix \(P\) be arbitrary but finite, as in Assumption~\ref{ass:finite-extension}. Let the reserve grid \(\mathcal R\) also be arbitrary and finite. Now choose $\theta=0$.
Under this choice, the reserve transition becomes
\[
R'(x,a)=(1-\rho)R,
\]
so the next-period reserve no longer depends on the current action profile \(a\). Hence, for every date \(t\) and state \(x=(R,z)\), the continuation-game payoff
\[
g_{i,t}(x,a)
=
u_i(a;x)
+
\delta \sum_{z'\in Z} P(z,z')\,V^\sigma_{i,t+1}(R'(x,a),z')
\]
can be written as
\[
g_{i,t}(x,a)
=
u_i(a;x)+C_t(x),
\]
where \(C_t(x)\) is a constant with respect to the action profile \(a\). Therefore, the continuation game at any state-date pair \((t,x)\) has exactly the same best responses as the baseline stage game.
By Theorem~\ref{thm:unique-sne}, for every state \(x=(R,\price,\xi)\), the baseline stage game admits a unique symmetric Nash equilibrium, with aggregate security
\[
s^*(x)
=
\min\!\left\{
\frac{(1-\theta)\xi+\sqrt{(1-\theta)^2 \xi^2+4A_N\cdot B_N\cdot \price\cdot R}}{2A_N},
\frac{\xi}{\beta}
\right\},
\]
and corresponding individual action
$a^*(x)=\frac{s^*(x)}{N}.$
Since here \(\theta=0\), each continuation game therefore admits the same unique symmetric pure Nash equilibrium action $a_t^*(x)=a^*(x)$ for every $t\in\{0,\dots,T\},\ x\in \mathcal R\times Z$.
Thus, the hypothesis of Proposition~\ref{prop:mpe-uniqueness} is satisfied, and it follows that the dynamic validator game admits a unique symmetric pure Markov perfect equilibrium.
\end{example}

\subsection{Continuous-State First-Order Condition}

The finite-state extension above gives an existence theorem. To understand economically how forward-looking incentives modify the baseline equilibrium, it is useful to return to the original continuous model and derive the equilibrium condition that must hold whenever a pure differentiable Markov equilibrium exists.

The next result shows exactly how forward-looking incentives perturb the myopic equilibrium.
\begin{proposition}[Dynamic symmetric first-order condition]
\label{prop:dynamic-foc}
Fix a horizon $T\ge 1$ and suppose that, in the original continuous model, there exists a pure symmetric Markov perfect equilibrium with continuation value functions
\[
V_{t+1}(R,\price,\xi),
\qquad
t=0,\dots,T-1,
\]
that are continuously differentiable in $R$. Fix a date $t<T$ and a state $x=(R,\price,\xi)$. If the equilibrium at that state is interior, with aggregate security $s_t^{\mathrm{M}}(x)\in (0,\xi/\beta)$, then it satisfies
\begin{equation}
\label{eq:dynamic-foc}
(1-\theta)\xi
-
A_N \cdot s
+
\frac{B_N \cdot \price \cdot R}{s}
+
\frac{\delta \cdot\theta}{\price}
\left(\xi-\frac{N+1}{N}\beta\cdot s\right)
M_{t+1}(x;s)
=
0,
\end{equation}
where $s=s_t^{\mathrm{M}}(x)$ and
\begin{align}
& M_{t+1}(x;s)
:= \nonumber \\
& 
\label{eq:M-def}
\E\!\left[
\partial_R V_{t+1}\!\left(
(1-\rho)R + \theta \frac{(\xi-\beta\cdot s)s}{\price},
\price_{t+1},
\xi_{t+1}
\right)
\,\middle|\,
\substack{
\price_t=\price,\\
\xi_t=\xi
}
\right].
\end{align}
\end{proposition}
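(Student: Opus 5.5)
The plan is to write validator $i$'s date-$t$ objective at state $x$ with the others fixed at the symmetric equilibrium action $a=s/N$, differentiate it in $a_i$, and evaluate at $a_i=a$. In the continuous model this objective is
\[
G_i(a_i)=\frac{a_i}{S}\Bigl((1-\theta)(\xi-\beta S)S+\rho\,\price\,R\Bigr)-\frac{\kappa}{2}a_i^2+\delta\,\E\bigl[V_{t+1}(R'(S),\price_{t+1},\xi_{t+1})\mid \price_t=\price,\xi_t=\xi\bigr],
\]
where $S=a_i+(N-1)a$ and $R'(S)=(1-\rho)R+\theta(\xi-\beta S)S/\price$. An interior equilibrium ($0<a<\xi/(N\beta)$) lies in the interior of the action interval, so the necessary condition is $G_i'(a)=0$.

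For the stage part I reuse the computation behind \Cref{thm:unique-sne}. The fee term is $(1-\theta)a_i(\xi-\beta S)$, whose derivative at symmetry is $(1-\theta)(\xi-\beta s-\beta s/N)$. The reserve term is $\rho\,\price\,R\,a_i/S$, whose derivative is $\rho\,\price\,R\,(S-a_i)/S^2=\rho\,\price\,R\,(N-1)/(Ns)$. The cost term contributes $-\kappa s/N$. Adding these and using \eqref{eq:A-B} gives
\[
(1-\theta)\xi-A_N s+\frac{B_N\,\price\, R}{s}.
\]
This is exactly the myopic first-order condition whose root is \eqref{eq:s-star}, and it provides a sanity check on the constants.

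For the continuation term, $dR'/da_i=\theta(\xi-2\beta S)/\price$. I would exchange differentiation and conditional expectation, justified by the continuity of $\partial_R V_{t+1}$ and dominated convergence on the compact action set; if needed I would add an integrability assumption. This gives
\[
\frac{\delta\theta}{\price}(\xi-2\beta s)\,M_{t+1}(x;s),
\]
with $M_{t+1}$ as in \eqref{eq:M-def}.

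The statement, however, has the factor $\xi-\frac{N+1}{N}\beta s$, not $\xi-2\beta s$. That is the main obstacle, and I would reconcile it first. One option is that the paper scales the continuation marginal per validator: $V_{t+1}$ is a per-capita value, and only the $1/N$ share of $\partial_S R'$ is internalized, with the remainder attributed via the $(N+1)/N$ aggregation used in $A_N$. Concretely, I would write $\partial_{a_i}[(\xi-\beta S)S]=(\xi-\beta S)-\beta S$, then check whether the intended convention counts only validator $i$'s own contribution $a_i$ to the congestion term, giving $\xi-\beta s-\beta s/N$. That mirrors the fee-term derivative and yields exactly $\xi-\frac{N+1}{N}\beta s$.

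I would then fix the normalization so the continuation term is expressed as the same pro-rata-weighted derivative as the fee term, which is how the stated factor arises. Finally, I would divide through by the common positive factor to obtain \eqref{eq:dynamic-foc}. If that convention cannot be justified, the honest conclusion is that the factor should read $\xi-2\beta s$, and I would flag it.
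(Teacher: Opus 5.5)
Your route is the same as the paper's: hold the other validators at $s/N$, differentiate validator $i$'s current-plus-continuation payoff, reuse \eqref{eq:br-first-deriv} for the stage part, and apply the chain rule to the continuation term. Your stage computation is correct. Your chain-rule factor is also correct, and the mismatch you found is an error in the statement, not in your derivation. The next-period reserve depends on the profile only through $S$, and the reserve law \eqref{eq:reserve-law} credits $\theta$ times \emph{aggregate} fee revenue. Hence
\[
\frac{\partial}{\partial a_i}\,\frac{\theta(\xi-\beta S)S}{\price}=\frac{\theta}{\price}\,(\xi-2\beta S).
\]
The paper's proof writes this derivative as $\frac{\theta}{\price}\bigl(\xi-\beta(A_{-i}+2a_i)\bigr)$. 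That is the derivative of validator $i$'s \emph{own} revenue $(\xi-\beta S)a_i$, the Cournot marginal revenue from the stage term, and not of $(\xi-\beta S)S$. Evaluating it at $a_i=s/N$ is what produces $\xi-\frac{N+1}{N}\beta s$.

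The gap in your proposal is the reconciliation paragraph, and you should drop it. Counting only validator $i$'s own contribution to the congestion term is exactly this slip, and nothing in the model supports it. $V_{t+1}$ is evaluated at the actual next-period reserve. Any per-capita or aggregate normalization of $V_{t+1}$ only rescales the whole continuation term; it cannot turn the $2$ into $(N+1)/N$. At an interior equilibrium the left-hand side of \eqref{eq:dynamic-foc} therefore equals $\frac{\delta\theta}{\price}\cdot\frac{N-1}{N}\,\beta s\,M_{t+1}(x;s)$. This is nonzero whenever $\theta>0$ and $M_{t+1}(x;s)\neq 0$, since $N\ge 2$ and $s>0$. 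So the statement as written cannot be proved, and your fallback is the right conclusion. What you can prove is
\[
(1-\theta)\xi-A_N s+\frac{B_N\,\price\,R}{s}+\frac{\delta\theta}{\price}\,(\xi-2\beta s)\,M_{t+1}(x;s)=0 .
\]
A consequence is that the intertemporal term changes sign at the revenue-maximizing level $s=\xi/(2\beta)$, not at $s=N\xi/((N+1)\beta)$ as the remark after the proposition suggests.

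Two minor points. First, there is no common positive factor to divide out at the end, because the stage part already comes out as $(1-\theta)\xi-A_N s+B_N\price R/s$. Second, continuity of $\partial_R V_{t+1}$ together with compactness of the action set does not justify exchanging $\partial_{a_i}$ with the conditional expectation, since $(\price_{t+1},\xi_{t+1})$ is unbounded. You need an integrable bound on $\partial_R V_{t+1}$ uniformly for $R'$ near the equilibrium next-period reserve. Adding that assumption is fine; the paper uses it tacitly.
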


\begin{remark}
The first three terms in \Cref{eq:dynamic-foc} are exactly the myopic first-order condition from the baseline model. The final term is new. It is the marginal value of the effect of current participation on the next reserve stock. If $M_{t+1}(R,\price,\xi;s)\ge 0$, then forward-looking validators value reserve accumulation. The sign of the whole term is then governed by
\[
\xi-\frac{N+1}{N}\beta\cdot s,
\]
which is the marginal effect of one validator's additional participation on next period's reserve inflow. When aggregate participation is below the point at which extra participation sharply compresses fees, dynamic incentives push equilibrium security upward relative to the myopic benchmark. When the system is already close to maximum congestion, the intertemporal effect becomes weaker and can eventually reverse.
\end{remark}

The next proposition shows that the runway logic survives the Markov extension.
\begin{proposition}[Runway under a pure Markov equilibrium]
\label{prop:markov-runway}
Suppose the pure symmetric Markov perfect equilibrium from \Cref{prop:dynamic-foc} exists and denote its aggregate security policy at date $t$ by
\[
s_t^{\mathrm{M}}(R,\price,\xi).
\]
Define the date-$t$ security set
\[
\mathcal{S}_t(\underline{s})
:=
\{(R,\price,\xi)\in \R_+\times\R_{++}^2: s_t^{\mathrm{M}}(R,\price,\xi)\ge \underline{s}\}.
\]
Then, along any equilibrium sample path, security is maintained through date $T$ if and only if
\[
(R_t,\price_t,\xi_t)\in \mathcal{S}_t(\underline{s})
\qquad \text{for all } t=0,\dots,T.
\]
If, in addition, $s_t^{\mathrm{M}}(R,\price,\xi)$ is weakly increasing in $R$ for each fixed $(\price,\xi)$, then the dynamic reserve threshold
\[
\SR_t^{\mathrm{M}}(\price,\xi;\underline{s})
:=
\inf\{R\ge 0: s_t^{\mathrm{M}}(R,\price,\xi)\ge \underline{s}\}
\]
is well defined, and the runway criterion can be written as
\[
R_t \ge \SR_t^{\mathrm{M}}(\price_t,\xi_t;\underline{s})
\qquad \text{for all } t=0,\dots,T.
\]
\end{proposition}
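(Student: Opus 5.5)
The plan is to argue pathwise, exactly as for \Cref{prop:runway}, and then to prove the threshold reformulation from monotonicity alone.

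For the first claim, fix an equilibrium sample path. By definition, security is maintained at date $t$ if and only if the security level $s_t^{\mathrm{M}}(R_t,\price_t,\xi_t)$ realized under the equilibrium policy is at least $\underline{s}$. On the equilibrium path the reserve evolves by \eqref{eq:reserve-law} with $s_t=s_t^{\mathrm{M}}(X_t)$, so the state $X_t=(R_t,\price_t,\xi_t)$ determines the date-$t$ action. The condition at date $t$ is therefore literally $X_t\in\mathcal{S}_t(\underline{s})$. Taking the conjunction over $t=0,\dots,T$ gives the equivalence; nothing beyond unfolding definitions is needed.

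For the second claim, fix $(\price,\xi)$ and let $I_t:=\{R\ge 0: s_t^{\mathrm{M}}(R,\price,\xi)\ge\underline{s}\}$. Weak monotonicity in $R$ makes $I_t$ an upper set in $\R_+$: if $R\in I_t$ and $R'\ge R$, then $R'\in I_t$. Hence $I_t$ is empty, or of the form $[r,\infty)$, or of the form $(r,\infty)$. The infimum $\SR_t^{\mathrm{M}}$ is well defined in $[0,\infty]$, with the convention $\inf\emptyset=\infty$ matching \Cref{thm:threshold}. The containment $I_t\subseteq\{R\ge\SR_t^{\mathrm{M}}\}$ is immediate, and the converse holds except possibly at the boundary point $R=\SR_t^{\mathrm{M}}$.

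The main obstacle is this boundary point, i.e., showing that the infimum is attained. Monotonicity alone does not guarantee it, since $s_t^{\mathrm{M}}$ could jump up just to the right of the threshold. To close the gap I would use right-continuity (or, more simply, continuity) of $s_t^{\mathrm{M}}$ in $R$. Continuity in $R$ is natural given the standing differentiability assumption on $V_{t+1}$ from \Cref{prop:dynamic-foc}. With it, I would argue via the implicit-function theorem applied to \eqref{eq:dynamic-foc} in the interior case, and in the boundary case $s=\xi/\beta$, where the policy is constant. Under right-continuity, taking $R_n\downarrow\SR_t^{\mathrm{M}}$ with $R_n\in I_t$ gives $s_t^{\mathrm{M}}(\SR_t^{\mathrm{M}},\price,\xi)\ge\underline{s}$, so $I_t=[\SR_t^{\mathrm{M}},\infty)$.

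If I do not want to rely on that regularity, I would instead state the criterion with the set $I_t$ itself, that is, as $R_t\in I_t$. This equals $\{R_t\ge\SR_t^{\mathrm{M}}\}$ up to the single boundary value. I would add a remark that, in the myopic case, \Cref{thm:threshold} confirms the infimum is attained, because $s^*$ is continuous by \eqref{eq:s-star}.

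Finally, substituting the equivalence $X_t\in\mathcal{S}_t(\underline{s})\iff R_t\ge\SR_t^{\mathrm{M}}(\price_t,\xi_t;\underline{s})$ into the first part, date by date, yields the stated runway criterion.
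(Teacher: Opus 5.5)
Your plan follows the paper's proof: the first claim is unfolding definitions, and the second rests on weak monotonicity making $I_t$ an upper set. Your concern about the boundary point is well founded, and the paper does not address it. It simply asserts that $I_t$ is empty or the closed ray $[\SR_t^{\mathrm{M}},\infty)$, which monotonicity alone does not give. For example, take a weakly increasing policy equal to $\underline{s}/2$ on $[0,r]$ and to $2\underline{s}$ on $(r,\infty)$. It has $\SR_t^{\mathrm{M}}=r$, yet at $R_t=r$ we have $R_t\ge\SR_t^{\mathrm{M}}$ while security fails. For a nondecreasing policy, requiring $I_t$ to be closed for every target $\underline{s}$ is equivalent to right-continuity of $s_t^{\mathrm{M}}$ in $R$. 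So right-continuity is exactly the missing ingredient, as you say.

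The gap in your proposal is the claim that this regularity follows from the standing assumptions via the implicit-function theorem. That step does not go through, for three reasons.
First, the left side of \eqref{eq:dynamic-foc} contains $M_{t+1}(x;s)$, an expectation of $\partial_R V_{t+1}$ at a point depending on $(R,s)$. Making it $C^1$ in $(R,s)$ needs second derivatives of $V_{t+1}$, whereas \Cref{prop:dynamic-foc} assumes only $C^1$.
Second, in the myopic case nondegeneracy comes from $F'(s)=-A_N-B_N\price R/s^2<0$. The $s$-derivative of \eqref{eq:dynamic-foc}, however, also contains the derivative of $\frac{\delta\theta}{\price}\bigl(\xi-\frac{N+1}{N}\beta s\bigr)M_{t+1}(x;s)$, whose sign is uncontrolled.
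Third, even where the theorem applies, it only yields a continuous local root of a \emph{necessary} condition. It does not show that the given equilibrium selection $s_t^{\mathrm{M}}$ stays on that branch. The selection could jump to another root, to another equilibrium of the continuation game, or between the interior regime and the capacity regime $s=\xi/\beta$; your remark that the policy is constant on the boundary does not cover the switch point.

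Continuity of the policy is a property of the equilibrium selection, and nothing in the hypotheses provides it. It could be obtained, e.g., by a closed-graph argument if every date-$t$ continuation game had a unique symmetric equilibrium. The correct repair is to add right-continuity in $R$ as an explicit hypothesis, or to state the criterion as $R_t\in I_t$ (your fallback). With either amendment your argument is complete and more careful than the paper's.
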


 The threshold $\SR_t^{\mathrm{M}}(\cdot)$ need no longer have a closed form, but once equilibrium participation is monotone in reserves, the safe region is still summarized by a reserve cutoff at each state and date.

\section{Failure Probabilities and Expected Hand-Off Times}\label{sec:Brownian}

The main results identify the safe region state by state and path by path without committing to a particular stochastic law for price and demand. For quantitative policy analysis, however, one often wants explicit probabilistic outputs such as failure probabilities over a fixed horizon or expected time to fee-only operation. This section derives such metrics under a parametric specialization.

\begin{assumption}[Lognormal state dynamics]
\label{ass:parametric}
There exist constants $\mu_{\price},\mu_{\xi} \in \R$ and $\sigma_{\price},\sigma_{\xi}>0$, a standard Brownian motion $(W_t)_{t\ge 0}$, and an i.i.d. sequence
\[
(\Delta W_t,\eta_t)_{t\ge 1}
\]
of centered bivariate normal random vectors with unit marginal variances, where $\Delta W_t:=W_t-W_{t-1}$ for $t\ge 1$, such that, for every $t\ge 0$,
\begin{align}
\log \price_t &= \log \price_0 + \left(\mu_{\price}-\frac{\sigma_{\price}^2}{2}\right)t + \sigma_{\price} \cdot  W_t, \label{eq:parametric-q}\\
\log \xi_t &= \log \xi_0 + \mu_{\xi} t + \sigma_{\xi} \sum_{k=1}^t \eta_k. \label{eq:parametric-xi}
\end{align}
We write $\Phi$ for the standard normal cumulative distribution function (cdf).
\end{assumption}

\begin{remark}[geometric Brownian motion]
\Cref{ass:parametric} is imposed only in this section. It strengthens \Cref{ass:state} in order to convert the law-free runway criteria into explicit probabilistic metrics. The price process is a geometric Brownian motion observed at integer dates, while demand remains in a discrete-time lognormal specification. Correlation between token-price shocks and demand shocks is allowed through the joint law of $(\Delta W_t,\eta_t)$; the results below use only the normal marginal distributions and a union-bound argument.
\end{remark}

Next, we convert the law-free stress-test criterion from \Cref{thm:stress-test} into an explicit statement.

\begin{theorem}[Finite-horizon failure probability under lognormal dynamics]
\label{thm:parametric-failure}
Fix a horizon $T\in \mathbb{N}$ and confidence parameters $z_{\price},z_{\xi}\ge 0$. Define deterministic lower envelopes by
\begin{align}
\underline{\price}_t(z_{\price}) &:= \price_0\cdot \exp\! \left(\left(\mu_{\price}-\frac{\sigma_{\price}^2}{2}\right)t-\sigma_{\price}\cdot \sqrt{t}\cdot z_{\price}\right), \label{eq:parametric-lower-q}\\
\underline{\xi}_t(z_{\xi}) &:= \xi_0\cdot \exp\!\left(\mu_{\xi} \cdot t-\sigma_{\xi}\cdot \sqrt{t}\cdot z_{\xi}\right), \label{eq:parametric-lower-xi}
\end{align}
for $t=0,\dots,T$. If
\begin{equation}
\label{eq:parametric-stress}
(1-\rho)^tR_0 \ge \SR(\underline{\price}_t(z_{\price}),\underline{\xi}_t(z_{\xi});\underline{s})
\qquad \text{for all } t=0,\dots,T,
\end{equation}
then
\begin{equation}
\label{eq:parametric-failure-bound}
\Prob(\tau_{\mathrm{fail}}>T)
\ge
1-T\cdot \big(\Phi(-z_{\price})+\Phi(-z_{\xi})\big).
\end{equation}
In particular, if
\[
z_{\price}=\Phi^{-1}\!\left(1-\frac{\alpha_{\price}}{T}\right),
\qquad
z_{\xi}=\Phi^{-1}\!\left(1-\frac{\alpha_{\xi}}{T}\right)
\]
for some $\alpha_{\price},\alpha_{\xi}\in(0,1)$, then
\[
\Prob(\tau_{\mathrm{fail}}>T)\ge 1-\alpha_{\price}-\alpha_{\xi}.
\]
\end{theorem}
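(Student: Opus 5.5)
The plan is to reduce the statement to \Cref{thm:stress-test}, using the lower envelopes \eqref{eq:parametric-lower-q}--\eqref{eq:parametric-lower-xi}. Condition \eqref{eq:parametric-stress} is exactly hypothesis \eqref{eq:stress-bound} with $\underline{\price}_t=\underline{\price}_t(z_{\price})$ and $\underline{\xi}_t=\underline{\xi}_t(z_{\xi})$. \Cref{thm:stress-test} then gives $E_T\subseteq\{\tau_{\mathrm{fail}}>T\}$, so $\Prob(\tau_{\mathrm{fail}}>T)\ge \Prob(E_T)=1-\Prob(E_T^c)$. It therefore suffices to show $\Prob(E_T^c)\le T(\Phi(-z_{\price})+\Phi(-z_{\xi}))$.

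Next I would bound $\Prob(E_T^c)$ by a union bound over dates and over the two coordinates:
\[
\Prob(E_T^c)\le \sum_{t=0}^T\Prob(\price_t<\underline{\price}_t)+\sum_{t=0}^T\Prob(\xi_t<\underline{\xi}_t).
\]
At $t=0$ both envelopes equal the deterministic initial values, so the strict-inequality events are empty. Only $t=1,\dots,T$ contribute, which is why the factor is $T$ rather than $T+1$.

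For $t\ge1$, taking logs in \eqref{eq:parametric-q} shows that $\price_t<\underline{\price}_t(z_{\price})$ iff $\sigma_{\price}W_t<-\sigma_{\price}\sqrt t\,z_{\price}$. Since $\sigma_{\price}>0$ and $W_t/\sqrt t\sim N(0,1)$, this iff $W_t/\sqrt t<-z_{\price}$, which has probability $\Phi(-z_{\price})$. Similarly, $\sum_{k=1}^t\eta_k$ is a sum of $t$ i.i.d.\ standard normals (only the marginals are used, so correlation with $\Delta W$ is irrelevant). Hence $\sum_{k=1}^t\eta_k/\sqrt t\sim N(0,1)$ and $\Prob(\xi_t<\underline{\xi}_t)=\Phi(-z_{\xi})$. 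Summing over $t=1,\dots,T$ gives \eqref{eq:parametric-failure-bound}.

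The final claim follows by substituting the stated $z$'s. We have $\Phi(-z_{\price})=1-\Phi(z_{\price})=\alpha_{\price}/T$, and likewise for $\xi$, so the bound becomes $1-\alpha_{\price}-\alpha_{\xi}$. Since $\alpha_{\price}/T<1/2$ is not guaranteed, $z_{\price}$ could be negative, contradicting $z_{\price}\ge0$. This is only a minor consistency caveat: the argument above never used $z\ge0$, so it goes through for any real $z$. There is no real obstacle. The only delicate points are handling $t=0$ so the union bound has $T$ terms, and noting that cross-correlation plays no role.
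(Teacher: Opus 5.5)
Your proposal is correct and follows the paper's proof essentially step for step: you invoke \Cref{thm:stress-test} with the given envelopes, bound $\Prob(E_T^c)$ by a union bound over $t=1,\dots,T$ using that $W_t/\sqrt{t}$ and $t^{-1/2}\sum_{k\le t}\eta_k$ are standard normal, and then substitute the quantiles. Two small refinements of yours are also correct and are left implicit in the paper: the explicit treatment of $t=0$, where the paper simply starts the union bound at $t=1$, and the observation that $z_{\price},z_{\xi}\ge 0$ is never used, which matters only when $T=1$ and $\alpha_{\price}\ge 1/2$ or $\alpha_{\xi}\ge 1/2$.
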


\begin{proposition}[Hand-off probabilities and expected hand-off time]
\label{prop:parametric-handoff}
Suppose \Cref{ass:parametric} holds and define the fee-only demand cutoff
\begin{equation}
\label{eq:fee-only-cutoff}
\xi^{\mathrm{FO}}:=\frac{A_N\cdot \underline{s}}{1-\theta}.
\end{equation}
Let
\[
y_0:=\log \xi_0,
\qquad
y^{\mathrm{FO}}:=\log \xi^{\mathrm{FO}}.
\]
Assume $\mu_{\xi}>0$ and $\xi_0<\xi^{\mathrm{FO}}$. Then:
\begin{enumerate}[label=(\roman*)]
\item The hand-off time satisfies
\[
\tau_{\mathrm{hand}}
=
\inf\{t\ge 0:\xi_t\ge \xi^{\mathrm{FO}}\}
<
\infty
\qquad \text{almost surely.}
\]
\item For every horizon $T\in\mathbb{N}$,
\begin{equation}
\label{eq:parametric-handoff-prob}
\Prob(\tau_{\mathrm{hand}}\le T)
\ge
1-\Phi\!\left(\frac{y^{\mathrm{FO}}-y_0-\mu_{\xi} T}{\sigma_{\xi}\sqrt{T}}\right).
\end{equation}
\item The expected hand-off time is finite and satisfies
\begin{equation}
\label{eq:parametric-handoff-lower}
\frac{y^{\mathrm{FO}}-y_0}{\mu_{\xi}}
\le
\E[\tau_{\mathrm{hand}}]
\le
1+\sum_{t=1}^{\infty}
\Phi\!\left(\frac{y^{\mathrm{FO}}-y_0-\mu_{\xi} t}{\sigma_{\xi}\sqrt{t}}\right)
<
\infty.
\end{equation}
\item If, in addition, the condition of \Cref{thm:parametric-failure} holds for the same horizon $T$, then
\begin{equation}
\label{eq:parametric-success-prob}
\Prob(\tau_{\mathrm{hand}}<\tau_{\mathrm{fail}})
\ge
1-T\big(\Phi(-z_{\price})+\Phi(-z_{\xi})\big)
-\Phi\!\left(\frac{y^{\mathrm{FO}}-y_0-\mu_{\xi} T}{\sigma_{\xi}\sqrt{T}}\right).
\end{equation}
\end{enumerate}
\end{proposition}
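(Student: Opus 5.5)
The plan is to first identify the hand-off event with a one-dimensional level crossing of the log-demand random walk, and then read off (ii)–(iv) from standard random-walk facts. Throughout, write $Y_t:=\log\xi_t=y_0+\mu_{\xi}t+\sigma_{\xi}S_t$ with $S_t:=\sum_{k=1}^t\eta_k$. By \Cref{ass:parametric}, $S_t$ is a sum of i.i.d.\ standard normals, so $S_t\sim\mathcal N(0,t)$.

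For (i), I will show that $\SR(\price,\xi;\underline{s})=0$ if and only if $\xi\ge\xi^{\mathrm{FO}}$.
\begin{itemize}
\item If $\xi\ge\xi^{\mathrm{FO}}$, then $\underline{s}\le(1-\theta)\xi/A_N$. Since $A_N>\frac{N+1}{N}(1-\theta)\beta>(1-\theta)\beta$, this gives $\underline{s}<\xi/\beta$. Hence the first branch of \eqref{eq:threshold} applies, and \Cref{cor:fee-only} gives $\SR=0$.
\item Conversely, $\SR=0$ forces the first branch, so $\underline{s}\le\xi/\beta$. \Cref{cor:fee-only} then gives $(1-\theta)\xi\ge A_N\underline{s}$, i.e.\ $\xi\ge\xi^{\mathrm{FO}}$.
\end{itemize}
This proves the displayed identity for $\tau_{\mathrm{hand}}$. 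Almost sure finiteness follows from the strong law of large numbers: $Y_t/t\to\mu_{\xi}>0$ a.s., so $Y_t\to\infty$ and eventually exceeds $y^{\mathrm{FO}}$.

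For (ii), I use the inclusion $\{\xi_T\ge\xi^{\mathrm{FO}}\}\subseteq\{\tau_{\mathrm{hand}}\le T\}$. Because $Y_T\sim\mathcal N(y_0+\mu_{\xi}T,\sigma_{\xi}^2T)$, this gives
\[
\Prob(\tau_{\mathrm{hand}}\le T)\ge\Prob(Y_T\ge y^{\mathrm{FO}})=1-\Phi\!\left(\frac{y^{\mathrm{FO}}-y_0-\mu_{\xi}T}{\sigma_{\xi}\sqrt T}\right).
\]

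For (iii), I prove the upper bound first, since finiteness of $\E[\tau_{\mathrm{hand}}]$ is needed for the lower bound.
\begin{itemize}
\item \emph{Upper bound.} Write $\E[\tau_{\mathrm{hand}}]=\sum_{t\ge0}\Prob(\tau_{\mathrm{hand}}>t)$. The $t=0$ term is at most $1$. For $t\ge1$, use $\{\tau_{\mathrm{hand}}>t\}\subseteq\{Y_t<y^{\mathrm{FO}}\}$ to get the stated series.
\item \emph{Summability.} With $c:=y^{\mathrm{FO}}-y_0>0$, the argument of $\Phi$ is $-(\mu_{\xi}\sqrt t-c/\sqrt t)/\sigma_{\xi}$. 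For $t>c/\mu_{\xi}$ this argument is negative, so the Gaussian tail bound $\Phi(-u)\le e^{-u^2/2}$ for $u\ge0$ applies. It gives terms of order $\exp(-\mu_{\xi}^2t/(4\sigma_{\xi}^2))$ for large $t$, which form a summable sequence.
\item \emph{Lower bound.} Apply Wald's identity to the random walk $Y_t$. Its increments $\mu_{\xi}+\sigma_{\xi}\eta_t$ are i.i.d.\ and integrable, and $\tau_{\mathrm{hand}}$ is a stopping time with finite mean. Hence $\E[Y_{\tau_{\mathrm{hand}}}]=y_0+\mu_{\xi}\E[\tau_{\mathrm{hand}}]$. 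Since $Y_{\tau_{\mathrm{hand}}}\ge y^{\mathrm{FO}}$ by definition of the stopping time, rearranging yields $\E[\tau_{\mathrm{hand}}]\ge(y^{\mathrm{FO}}-y_0)/\mu_{\xi}$.
\end{itemize}
The main obstacle is verifying the hypotheses of Wald's identity. The filtration must be one to which $\tau_{\mathrm{hand}}$ is adapted and for which future increments $\eta_{t+1}$ are independent of the past. The joint i.i.d.\ structure of $(\Delta W_t,\eta_t)$ supplies this, and correlation with the price shocks does not interfere.

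For (iv), I use the inclusion
\[
\{\tau_{\mathrm{hand}}\le T\}\cap\{\tau_{\mathrm{fail}}>T\}\subseteq\{\tau_{\mathrm{hand}}<\tau_{\mathrm{fail}}\}.
\]
Taking complements and applying a union bound gives
\[
\Prob(\tau_{\mathrm{hand}}<\tau_{\mathrm{fail}})\ge 1-\Prob(\tau_{\mathrm{fail}}\le T)-\Prob(\tau_{\mathrm{hand}}>T).
\]
The first probability is bounded by \eqref{eq:parametric-failure-bound} in \Cref{thm:parametric-failure}. The second is bounded by part (ii). Combining the two gives \eqref{eq:parametric-success-prob}.
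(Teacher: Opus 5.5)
Your proposal is correct and follows the paper's own proof essentially step for step. The steps are: the identification of $\SR=0$ with $\xi\ge\xi^{\mathrm{FO}}$ via \Cref{cor:fee-only} and $A_N>(1-\theta)\beta$, plus the strong law, for (i); the inclusion $\{Y_T\ge y^{\mathrm{FO}}\}\subseteq\{\tau_{\mathrm{hand}}\le T\}$ for (ii); the tail-sum formula followed by Wald's identity for (iii); and the intersection-plus-union-bound argument for (iv). You also write out the converse direction of the threshold identity, give an explicit Gaussian tail estimate for summability, and check Wald's hypotheses, all of which the paper treats more briefly.
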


\Cref{prop:parametric-handoff} separates two distinct quantitative questions. The first is how quickly demand alone is likely to carry the protocol into the fee-only region. The second is whether that arrival is likely to happen before reserve failure. Positive demand drift makes eventual hand-off almost sure in this parametric environment, while the finite-horizon bounds show how drift and volatility translate into operational success probabilities.

The analytical results above are complemented by two numerical illustrations. \Cref{fig:runway-scenarios} plots three deterministic state paths built from the parametric law: one path fails before hand-off, one reaches fee-only sustainability gradually, and one hands off quickly. \Cref{fig:probabilistic-metrics} then shows how the quantitative runway picture changes with demand drift in the parametric environment. The left panel reports Monte Carlo estimates of survival, hand-off, and successful-transition probabilities at a fixed horizon, while the right panel plots the analytical lower and upper bounds on expected hand-off time from \Cref{prop:parametric-handoff}.

\begin{figure}[tbp]
\centering
\includegraphics[width=\textwidth]{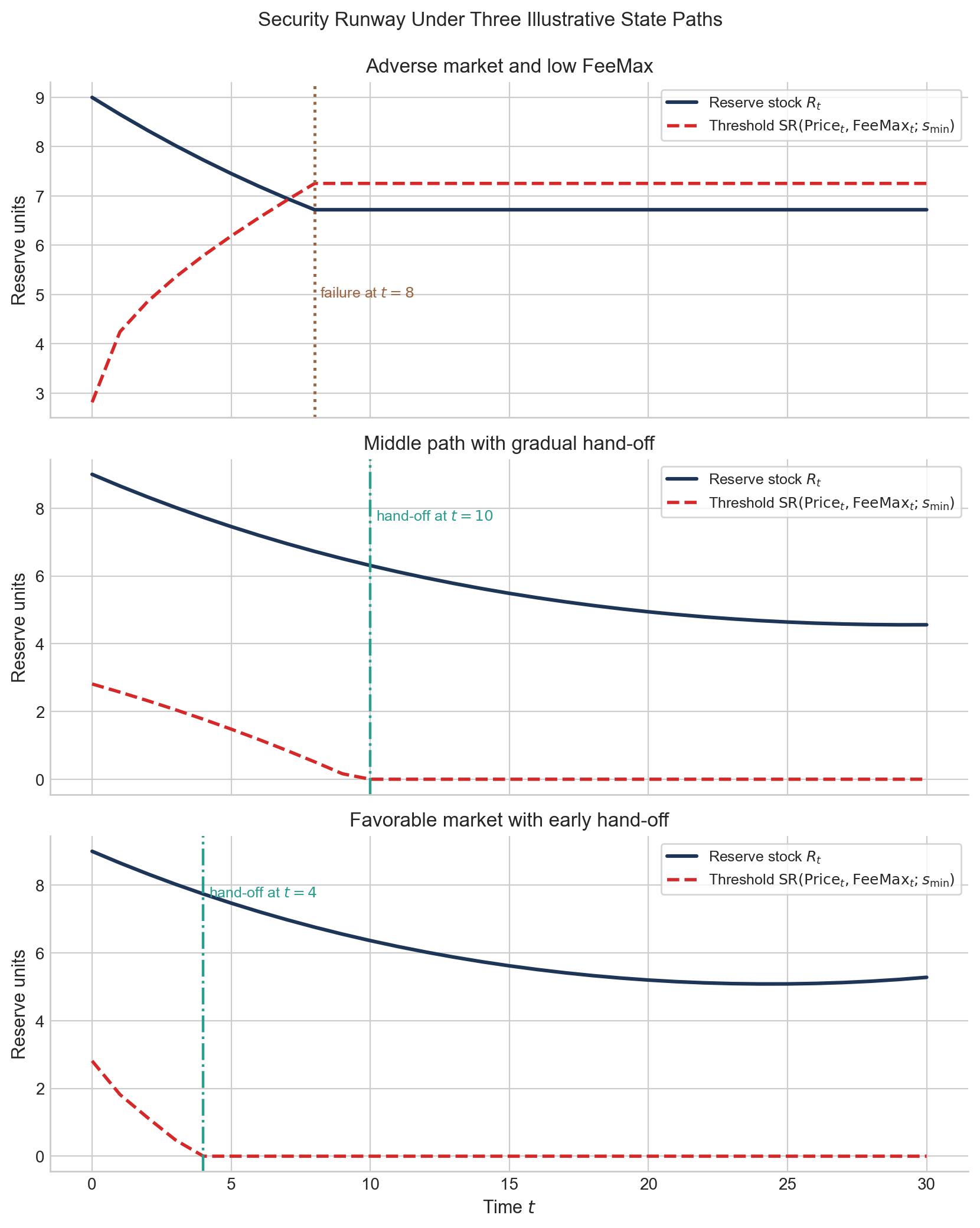}
\caption{Illustrative runway paths under three deterministic state trajectories generated from the parametric law. In each panel, the solid blue curve is the reserve stock $R_t$ and the dashed red curve is the state-dependent threshold $\SR(\price_t,\xi_t;\underline{s})$. The vertical marker identifies the first economically relevant event along the path: hand-off when the threshold reaches zero, or failure when the reserve falls below the threshold. The adverse path crosses the failure boundary before reaching the fee-only region; the middle path reaches fee-only sustainability only after a prolonged transition; and the favorable path hands off early. The figure uses the same protocol parameters as \Cref{fig:threshold-map}, with initial reserve $R_0=9$.}
\label{fig:runway-scenarios}
\end{figure}

\begin{figure}[tbp]
\centering
\includegraphics[width=\textwidth]{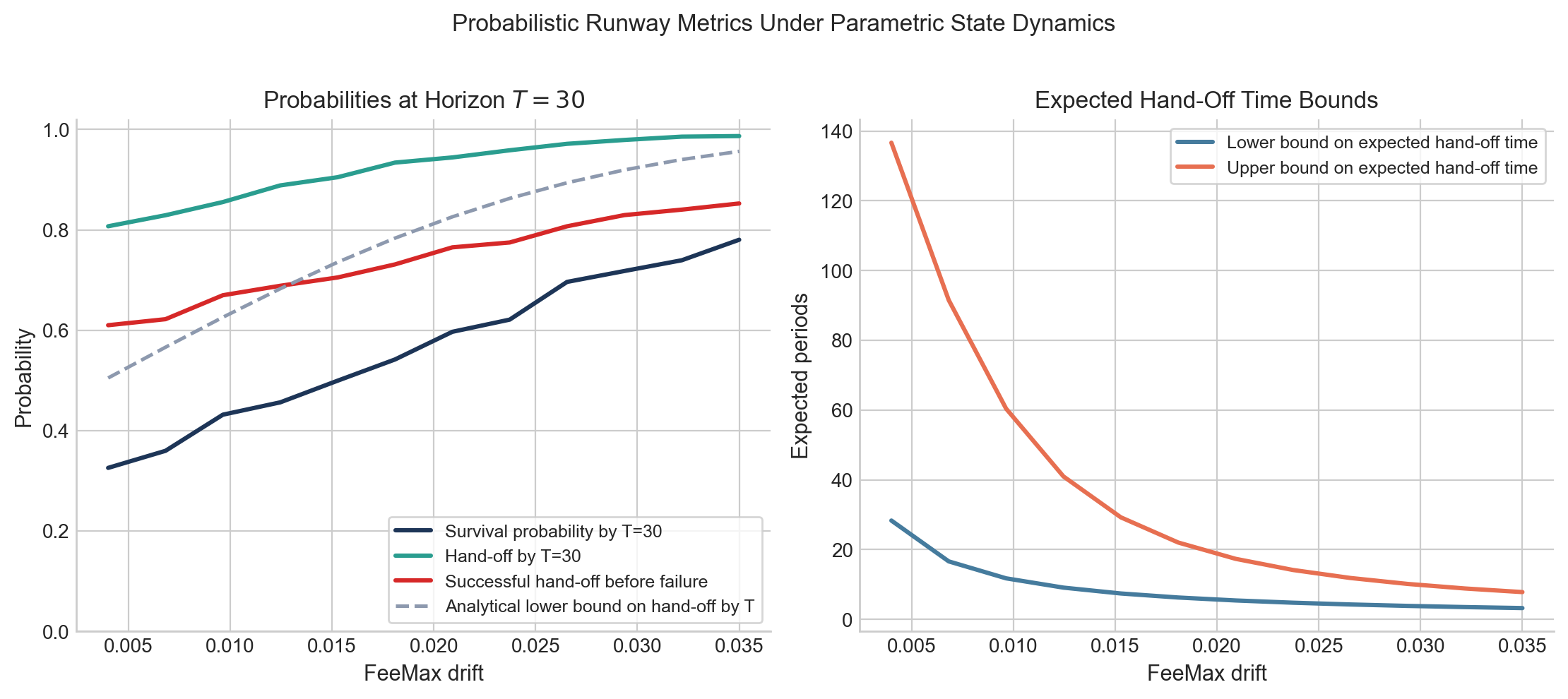}
\caption{Probabilistic runway metrics under the parametric state dynamics. The horizontal axis in both panels is the demand drift $\mu_{\xi}$. In the left panel, the solid curves are Monte Carlo estimates of survival probability by horizon $T=30$, hand-off probability by horizon $T=30$, and the probability of successful hand-off before failure. The dashed curve is the analytical lower bound on hand-off by time $T$ from \Cref{prop:parametric-handoff}. In the right panel, the two curves are the analytical lower and upper bounds on expected hand-off time from \Cref{prop:parametric-handoff}. The figure uses the same protocol parameters as \Cref{fig:threshold-map}, with initial reserve $R_0=9$, $\price_0=1$, $\xi_0=1$, price drift $\mu_{\price}=0.005$, price volatility $\sigma_{\price}=0.16$, and demand volatility $\sigma_{\xi}=0.10$. Higher demand drift increases survival and hand-off probabilities and narrows the expected time to hand-off.}
\label{fig:probabilistic-metrics}
\end{figure}

\section{Conclusion and Future Work}

Proof-of-stake systems with finite reserves face a reserve hand-off problem.
Early validator rewards may rely on reserve subsidies, while long-run security
must eventually be supported by transaction fees. This paper formalized this
transition through a stochastic model with token-price risk, demand risk, and
strategic validator participation.
We solved the symmetric validator participation game and derived a closed-form
equilibrium. Using this equilibrium, we obtained an explicit state-dependent
reserve threshold. For each token price and demand state, the threshold gives
the minimal reserve stock necessary and sufficient to sustain a target security
level. This threshold separates states in which security is infeasible, states
in which security depends on the reserve, and states in which fees alone are
sufficient.
We then used the threshold to study the dynamic hand-off problem. Security is
maintained along a path exactly while the reserve remains above the
state-dependent threshold. Failure occurs when the reserve first falls below
this threshold. A successful hand-off occurs when the system reaches the
fee-only region before failure. This converts reserve adequacy into a
state-dependent hitting-time problem.
We also studied forward-looking validator incentives. In the Markov extension,
current participation affects both current rewards and the next reserve stock,
and therefore changes future reward opportunities. We proved finite-horizon
Markov perfect equilibrium existence and derived the dynamic first-order
condition that captures this continuation-value effect. Finally, under
lognormal price-demand dynamics, we translated the runway analysis into
finite-horizon failure-probability bounds, hand-off probabilities, and bounds on
expected hand-off time.

The main conclusion is that reserves are not neutral accounting devices. Their
security value depends on the current state of the system. A protocol with a
large nominal reserve can still be close to failure after an adverse price or
demand shock. Conversely, a protocol with a smaller reserve can be safe once
demand is strong enough to support the target security level through fees alone.
Reserve policy should therefore be evaluated through state-dependent runway
analysis rather than through nominal depletion dates or steady-state reward
ratios alone.

The model made several simplifying assumptions. Token price and demand were
treated as exogenous, so the analysis did not model token valuation or user
adoption. Validators were symmetric in the baseline model, so the threshold
should be interpreted as a benchmark rather than as a full description of
heterogeneous operators. The finite-horizon stress test was deliberately
conservative because it lower-bounded the reserve path by ignoring future fee
inflows. These assumptions made it possible to isolate the reserve hand-off
mechanism and obtain explicit threshold conditions.

A natural next step is adaptive reserve policy. In this paper, the policy
parameters $(\theta,\rho)$ were fixed. This provided a clean benchmark for
analyzing the security-runway mechanism. In principle, a protocol may want
payout and fee-retention rules to respond to the current reserve, token price,
and demand state. For example, the protocol could reduce reserve payouts after
adverse price realizations, increase fee retention when the reserve is low, or
change its policy once demand approaches the fee-only region.

Such an extension would turn the analysis from policy evaluation into policy
design. Governance would choose a state-dependent rule of the form
\[
(\theta_t,\rho_t)=\phi(R_t,\price_t,\xi_t)
\]
to extend runway, reduce failure risk, or reach fee-only sustainability subject
to incentive and credibility constraints. Developing such a theory would require
combining the equilibrium framework in this paper with tools from dynamic
mechanism design or stochastic control. We view this as an important direction
for future work.
\bibliography{references}

\newpage
\appendix

\section{Probabilistic Setup}

This appendix explains the standard measurability conventions used implicitly in the main text.
Let
\[
\mathcal{G}_t:=\sigma\big((\price_0,\xi_0),\dots,(\price_t,\xi_t)\big),
\qquad t\ge 0.
\]
By \Cref{ass:state}, the exogenous state process $(\price_t,\xi_t)_{t\ge 0}$ is adapted to $(\mathcal{G}_t)_{t\ge 0}$.

The state variable
\[
X_t=(R_t,\price_t,\xi_t)
\]
is therefore assembled from two sources. The coordinates $(\price_t,\xi_t)$ come from the exogenous shock process specified in \Cref{ass:state}. The coordinate $R_t$ is endogenous: it starts from the initial condition $R_0$ and is generated recursively from past states and past equilibrium participation through the reserve law. Thus the state process is induced jointly by the primitive shock process and the reserve recursion under the equilibrium play.

Now define the reserve process recursively by
\[
R_{t+1}
=
(1-\rho)R_t+\theta \frac{(\xi_t-\beta \cdot s_t)s_t}{\price_t},
\]
where $R_0$ is $\mathcal{F}_0$-measurable and $s_t$ is the equilibrium security level generated by the model at date $t$. Since $s_t$ is a measurable function of the current state $(R_t,\price_t,\xi_t)$, it follows inductively that $(R_t)_{t\ge 0}$ is adapted to the filtration generated by the state process.

Accordingly, the full state process
\[
X_t=(R_t,\price_t,\xi_t)
\]
is adapted. The stopping times introduced in \Cref{def:stopping-times},
\begin{align*}
\tau_{\mathrm{fail}}
&=
\inf\{t\ge 0:R_t<\SR(\price_t,\xi_t;\underline{s})\},\\
\tau_{\mathrm{hand}}
&=
\inf\{t\ge 0:\SR(\price_t,\xi_t;\underline{s})=0\},   
\end{align*}
are therefore stopping times with respect to the filtration generated by $(X_t)_{t\ge 0}$, because the events defining them depend only on the current state.

\section{Postponed Proofs}

\subsection{Proof of Theorem~\ref{thm:unique-sne}}

\begin{proof}
Fix a state \(x=(R,\price,\xi)\) and suppress the state argument \(x\) in the
notation of utility \eqref{eq:utility}.

For validator \(i\), let
\[
A_{-i}:=\sum_{j\ne i}a_j .
\]
If \(A_{-i}>0\), then for every feasible \(a_i\),
\[
u_i(a_i,a_{-i})
=
(1-\theta)\cdot \xi \cdot a_i
-
(1-\theta)\cdot \beta \cdot a_i(a_i+A_{-i})
+
\frac{\rho \cdot  \price \cdot R\,a_i}{a_i+A_{-i}}
-
\frac{\kappa}{2}a_i^2 .
\]
Differentiating with respect to \(a_i\) gives
\begin{align}
\frac{\partial u_i}{\partial a_i}
&=
(1-\theta)\cdot \xi
-
(1-\theta)\cdot \beta \cdot (2a_i+A_{-i})
+
\rho \cdot  \price \cdot R\cdot \frac{A_{-i}}{(a_i+A_{-i})^2}
-
\kappa a_i, \label{eq:br-first-deriv}\\
\frac{\partial^2 u_i}{\partial a_i^2}
&=
-2(1-\theta)\cdot \beta
-
2\rho  \cdot \price \cdot R\cdot \frac{A_{-i}}{(a_i+A_{-i})^3}
-
\kappa
<0. \label{eq:br-second-deriv}
\end{align}
Hence \(u_i(\cdot,a_{-i})\) is strictly concave whenever \(A_{-i}>0\).

If \(A_{-i}=0\), then for \(a_i>0\),
\[
u_i(a_i,0_{-i})
=
(1-\theta)\cdot (\xi-\beta \cdot a_i)\cdot a_i
+
\rho \cdot  \price \cdot R
-
\frac{\kappa}{2}a_i^2,
\]
whose second derivative is \(-2(1-\theta)\cdot \beta-\kappa<0\).

The all-zero profile is not a Nash equilibrium. Indeed, if all other
validators choose zero, validator \(i\) can choose a sufficiently small
\(a_i>0\). Then
\[
u_i(a_i,0_{-i})
=
(1-\theta)\cdot (\xi-\beta \cdot a_i)\cdot a_i
+
\rho \cdot  \price \cdot R
-
\frac{\kappa}{2}a_i^2
>0
\]
for all sufficiently small \(a_i>0\), whereas
\(u_i(0,\ldots,0)=0\). Thus every symmetric equilibrium has strictly positive
aggregate security \(s>0\).

Existence of a symmetric Nash equilibrium follows from the compact and convex
common action set \([0,\xi/(N\beta)]\), symmetry, and the existence of a
maximizer of each validator's payoff on this compact set. We now characterize
all symmetric equilibria and show uniqueness.

Let \(a_i=a\) for all \(i\), and write \(s=Na\). Since every symmetric
equilibrium has \(s>0\), the first-order condition for an interior symmetric
equilibrium is well defined. Substituting \(a=s/N\) and
\(A_{-i}=(N-1)s/N\) into \eqref{eq:br-first-deriv} gives
\[
(1-\theta)\cdot\xi
-
\frac{N+1}{N}\cdot(1-\theta)\cdot\beta \cdot s
-
\frac{\kappa}{N}\cdot s
+
\frac{N-1}{N}\cdot\frac{\rho\cdot  \price \cdot R}{s}
=0.
\]
Equivalently, using \Cref{eq:A-B},
\[
(1-\theta)\cdot \xi
-
A_N \cdot  s
+
\frac{B_N \cdot \price \cdot R}{s}
=0.
\]
Multiplying by \(s>0\) gives
\begin{equation}
\label{eq:quadratic}
A_N\cdot s^2-(1-\theta)\cdot \xi \cdot s-B_N \cdot \price \cdot R=0.
\end{equation}
If \(B_N \cdot \price \cdot R>0\), this equation has exactly one positive solution. If
\(B_N \price \cdot R=0\), its roots are \(0\) and \((1-\theta)\xi/A_N\); the root \(0\)
is extraneous because the first-order condition was derived only for \(s>0\).
Hence the unique positive interior candidate is
\[
\widehat{s}(R,\price,\xi)
=
\frac{(1-\theta)\cdot \xi+
\sqrt{(1-\theta)^2\cdot \xi^2+4A_N\cdot B_N \cdot \price \cdot R}}
{2A_N}.
\]

Define
\[
F(s):=(1-\theta)\cdot \xi-A_N \cdot s+\frac{B_N \cdot \price \cdot R}{s},
\qquad s>0.
\]
Then
\[
F'(s)=-A_N-\frac{B_N \cdot \price \cdot R}{s^2}<0,
\]
so the symmetric first-order expression is strictly decreasing in \(s\).
Therefore there is at most one interior symmetric equilibrium.

If \(\widehat{s}(R,\price,\xi)\le \xi/\beta\), the capacity constraint does not
bind, and strict concavity implies that the unique symmetric equilibrium has
aggregate security
\[
s^*=\widehat{s}(R,\price,\xi).
\]

If \(\widehat{s}(R,\price,\xi)>\xi/\beta\), then
\[
F(\xi/\beta)>0,
\]
because \(F\) is strictly decreasing and its unique positive zero is
\(\widehat{s}(R,\price,\xi)\). Thus, at the symmetric boundary profile
\(a_i=\xi/(N\beta)\), each validator's payoff is still increasing in its own
action at the upper end of the feasible interval. By strict concavity, the
unique best response is therefore the boundary action itself. Hence the unique
symmetric equilibrium has aggregate security
\[
s^*=\xi/\beta .
\]

Combining the two cases,
\[
s^*(R,\price,\xi)
=
\min\left\{
\frac{(1-\theta)\xi+
\sqrt{(1-\theta)^2\xi^2+4A_NB_N \price \cdot R}}
{2A_N},
\frac{\xi}{\beta}
\right\}.
\]
 The formula for the corresponding equilibrium  fee $f^*(x)$ follows directly from \eqref{eq:token-fee}. 
\end{proof}

\subsection{Proof of Proposition~\ref{prop:monotone}}
\begin{proof}
On the interior branch of \Cref{eq:s-star},
\[
\widehat{s}(R,\price,\xi)
=
\frac{(1-\theta)\xi + \sqrt{(1-\theta)^2\xi^2 + 4A_N B_N  \price \cdot R}}{2A_N}.
\]
The square-root term is increasing in $ \price \cdot R$ and in $\xi$, so $\widehat{s}$ is increasing in $R$, $q$, and $\xi$. It is decreasing in $A_N$, hence decreasing in $\kappa$, $\beta$, and $\theta$, and it is also decreasing in $\theta$ directly through the term $(1-\theta)\xi$. The full equilibrium $s^*$ is the minimum of $\widehat{s}$ and the capacity bound $\xi/\beta$, which preserves all weak monotonicity statements.
\end{proof}

\subsection{Proof of Theorem~\ref{thm:threshold}}
\begin{proof}
Suppose first that $\underline{s}>\xi/\beta$. Since \Cref{thm:unique-sne} implies $s^*(R,\price,\xi)\le \xi/\beta$ for every $R$, security feasibility is impossible, which matches $\SR(\price,\xi;\underline{s})=\infty$.

Now suppose $\underline{s}\le \xi/\beta$. By \Cref{thm:unique-sne}, security feasibility is equivalent to
\[
\widehat{s}(R,\price,\xi)\ge \underline{s},
\]
where $\widehat{s}$ denotes the interior root from the proof of \Cref{thm:unique-sne}. Using the explicit formula,
\[
\frac{(1-\theta)\xi + \sqrt{(1-\theta)^2\xi^2 + 4A_N B_N  \price \cdot R}}{2A_N}
\ge
\underline{s}.
\]
Rearranging yields
\[
\sqrt{(1-\theta)^2\xi^2 + 4A_N B_N  \price \cdot R}
\ge
2A_N\underline{s}-(1-\theta)\xi.
\]
If $2A_N\underline{s}-(1-\theta)\xi \le 0$, then the inequality holds automatically, and security is feasible even at $R=0$. This is exactly the case $\left[A_N\underline{s}^2-(1-\theta)\xi\underline{s}\right]_+=0$.

If $2A_N\underline{s}-(1-\theta)\xi>0$, both sides are nonnegative, so squaring is valid and gives
\[
(1-\theta)^2\xi^2 + 4A_N B_N  \price \cdot R
\ge
\left(2A_N\underline{s}-(1-\theta)\xi\right)^2.
\]
After cancellation,
\[
4A_N B_N  \price \cdot R
\ge
4A_N\left(A_N\underline{s}^2-(1-\theta)\xi\underline{s}\right).
\]
Since $A_N>0$, this is equivalent to
\[
R \ge \frac{A_N\underline{s}^2-(1-\theta)\xi\underline{s}}{B_N q}.
\]
Combining the two cases gives \Cref{eq:threshold}.

Finally, fee-only feasibility means security feasibility at $R=0$, which is equivalent to the threshold being zero.
\end{proof}

\subsection{Proof of Theorem~\ref{thm:stress-test}}
\begin{proof}
On $E_T$, we have $ \price_t\ge \underline{\price}_t$ and $\xi_t\ge \underline{\xi}_t$ for each $t\le T$. By \Cref{eq:threshold}, the threshold $\SR(\price,\xi;\underline{s})$ is weakly decreasing in both $q$ and $\xi$, so
\[
\SR( \price_t,\xi_t;\underline{s})
\le
\SR(\underline{\price}_t,\underline{\xi}_t;\underline{s})
\qquad \text{for all } t\le T.
\]
By \Cref{lem:pure-decay},
\[
R_t \ge (1-\rho)^t R_0.
\]
Combining this with \Cref{eq:stress-bound} gives
\[
R_t \ge \SR(\price_t,\xi_t;\underline{s})
\qquad \text{for all } t\le T
\]
on $E_T$. Therefore $\tau_{\mathrm{fail}}>T$ on $E_T$ by \Cref{prop:runway}. The probability statement follows immediately.
\end{proof}

\subsection{Proof of Theorem~\ref{thm:mpe-existence}}
\begin{proof}
We proceed by backward induction.

\emph{Step 1: terminal date.}
Fix a state $x=(R,z)\in \mathcal{R}\times\mathcal{Z}$. At date $T$, the continuation term is zero, so the date-$T$ continuation game is the one-shot normal-form game with action sets $\mathcal{A}(x)$ and payoffs $u_i(\cdot;x)$. The action sets are nonempty compact intervals and the payoff functions are continuous. Therefore the mixed-strategy equilibrium existence theorem for continuous games yields a mixed Nash equilibrium at date $T$ for every state $x$. Select one such equilibrium and denote it by $\sigma_T(\cdot\mid x)$. This determines $V_{i,T}^\sigma(x)$ through \Cref{eq:dynamic-value}.

\emph{Step 2: induction step.}
Suppose strategies $\sigma_{t+1},\dots,\sigma_T$ and continuation values $V_{i,t+1}^\sigma,\dots,V_{i,T}^\sigma$ have already been defined. Fix a current state $x=(R,z)$. Consider  the date-$t$ continuation-game payoff defined in \eqref{eq:continuation-game-payoff}, 
\begin{equation}
g_{i,t}(x,a)
:=
u_i(a;x)
+
\delta \sum_{z' \in \mathcal{Z}}
P(z,z')V_{i,t+1}^\sigma\big((R'(x,a),z')\big).
\end{equation}
Because $\mathcal{R}\times\mathcal{Z}$ is finite, the continuation values are finite numbers. Since $R'(x,a)$ is continuous in $a$ and $u_i(a;x)$ is continuous in $a$, the payoff $g_{i,t}(x,\cdot)$ is continuous on the compact action space $\mathcal{A}(x)^N$. Hence the date-$t$ continuation game at state $x$ admits a mixed Nash equilibrium. Select one such equilibrium and denote it by $\sigma_t(\cdot\mid x)$. Then define $V_{i,t}^\sigma(x)$ by \Cref{eq:dynamic-value}.

\emph{Step 3: verification.}
Repeating Step 2 for $t=T-1,T-2,\dots,0$ constructs a full strategy profile $\sigma$. By construction, at every date and state, $\sigma_t(\cdot\mid x)$ is a mixed Nash equilibrium of the continuation game generated by the already constructed continuation values. Therefore the one-shot deviation inequalities in \Cref{eq:mpe-ineq} hold at every date and state. Hence $\sigma$ is a Markov perfect equilibrium.
\end{proof}

\subsection{Proof of Proposition~\ref{prop:mpe-uniqueness}}
\begin{proof}
At date $T$, the claim is immediate because the continuation game is static and, by hypothesis, has a unique symmetric pure Nash equilibrium at each state. Assume recursively that the continuation from dates $t+1,\dots,T$ onward is uniquely pinned down by the pure actions $a_{t+1}^*,\dots,a_T^*$. Then the continuation values appearing in \Cref{eq:continuation-game-payoff} are uniquely determined. By hypothesis, the date-$t$ continuation game therefore has a unique symmetric pure Nash equilibrium action $a_t^*(x)$ at each state $x$. Proceeding backward to $t=0$ yields a unique symmetric pure Markov perfect equilibrium.
\end{proof}

\subsection{Proof of Proposition~\ref{prop:dynamic-foc}}
\begin{proof}
Fix date $t<T$, state $(R,\price,\xi)$, and let all validators other than $i$ choose the common equilibrium action $a=s/N$. Let \[A_{-i}:=\frac{N-1}{N}s.\] If validator $i$ deviates to $a_i$, its total current-plus-continuation payoff is
\begin{align*}
&\Psi_i(a_i;s,R,\price,\xi)
:=\,
(1-\theta)\xi \cdot a_i
-
(1-\theta)\beta \cdot a_i(a_i+A_{-i})
+
\frac{\rho  \price \cdot R\,a_i}{a_i+A_{-i}}
-
\frac{\kappa}{2}a_i^2 \\
&\,
+
\delta\,
\E\!\left[
V_{t+1}\!\left(
R_i'(a_i;s,R,\price,\xi),
\price_{t+1},
\xi_{t+1}
\right)
\middle|\,
\price_t=q,\ \xi_t=\xi
\right]\ ,
\end{align*}
where 
\[
R_i'(a_i;s,R,\price,\xi)
:=
(1-\rho)R+\theta \frac{(\xi-\beta(a_i+A_{-i}))(a_i+A_{-i})}{q}\ .
\]
Because the equilibrium is interior, the first-order condition for optimality is
\[
\left.\frac{\partial \Psi_i}{\partial a_i}\right|_{a_i=s/N}=0.
\]
Differentiating the current-payoff terms gives exactly the derivative computed in \Cref{eq:br-first-deriv}. Differentiating the continuation term by the chain rule yields
\[
\delta
\E\!\left[
\partial_R V_{t+1}(R',\price_{t+1},\xi_{t+1})
\cdot
\frac{\theta}{\price}\big(\xi-\beta(A_{-i}+2a_i)\big)
\middle|\,
\substack{
\price_t=\price,\\
\xi_t=\xi
}
\right],
\]
where
\[
R'=R_i'(a_i;s,R,\price,\xi).
\]
Evaluating at the symmetric profile $a_i=s/N$ gives $A_{-i}=(N-1)s/N$ and hence
\[
\xi-\beta(A_{-i}+2a_i)
=
\xi-\frac{N+1}{N}\beta \cdot s.
\]
The first-order condition therefore becomes
\begin{align*}
&(1-\theta)\xi
-
\frac{N+1}{N}(1-\theta)\beta \cdot s
-
\frac{\kappa}{N}s
+
\frac{N-1}{N}\frac{\rho  \price \cdot R}{s} \\
& \qquad +
\frac{\delta \theta}{\price}
\left(\xi-\frac{N+1}{N}\beta \cdot s\right)
M_{t+1}(R,\price,\xi;s)
=
0.
\end{align*}
Substituting the definitions of $A_N$ and $B_N$ \eqref{eq:A-B} yields \Cref{eq:dynamic-foc}.
\end{proof}

\subsection{Proof of Proposition~\ref{prop:markov-runway}}
\begin{proof}
The first statement is simply the definition of the security set generated by the equilibrium policy. For the second, monotonicity in $R$ implies that for each fixed $(\price,\xi)$ the set
\[
\{R\ge 0: s_t^{\mathrm{M}}(R,\price,\xi)\ge \underline{s}\}
\]
is either empty or a ray of the form $[\SR_t^{\mathrm{M}}(\price,\xi;\underline{s}),\infty)$. This gives the threshold representation.
\end{proof}

\subsection{Proof of Theorem~\ref{thm:parametric-failure}}
\begin{proof}
For each $t\ge 1$, \Cref{eq:parametric-q} implies
\[
\log \price_t \sim N\!\left(\log \price_0+\left(\mu_{\price}-\frac{\sigma_{\price}^2}{2}\right)t,\sigma_{\price}^2 t\right),
\]
so
\begin{align*}
\Prob\!\left(\price_t<\underline{\price}_t(z_{\price})\right)
&=
\Prob\!\left(
\frac{\log \price_t-\left(\log \price_0+\left(\mu_{\price}-\frac{\sigma_{\price}^2}{2}\right)t\right)}{\sigma_{\price}\sqrt{t}}<-z_{\price}
\right)\\
&=
\Phi(-z_{\price}).
\end{align*}
Likewise, \Cref{eq:parametric-xi} implies
\[
\Prob\!\left(\xi_t<\underline{\xi}_t(z_{\xi})\right)=\Phi(-z_{\xi}).
\]
Define
\[
E_T(z_{\price},z_{\xi})
:=
\{\price_t\ge \underline{\price}_t(z_{\price}),\ \xi_t\ge \underline{\xi}_t(z_{\xi})\text{ for all } t=0,\dots,T\}.
\]
By the union bound,
\begin{align*}
\Prob(E_T(z_{\price},z_{\xi})^c)
&\le
\sum_{t=1}^T \Prob\!\left(\price_t<\underline{\price}_t(z_{\price})\right)
+
\sum_{t=1}^T \Prob\!\left(\xi_t<\underline{\xi}_t(z_{\xi})\right) \\
&=
T\big(\Phi(-z_{\price})+\Phi(-z_{\xi})\big).
\end{align*}
Hence
\[
\Prob(E_T(z_{\price},z_{\xi}))
\ge
1-T\big(\Phi(-z_{\price})+\Phi(-z_{\xi})\big).
\]
The hypothesis \Cref{eq:parametric-stress} is exactly the deterministic lower-envelope condition from \Cref{thm:stress-test} with envelopes \Cref{eq:parametric-lower-q}--\Cref{eq:parametric-lower-xi}. Therefore $\tau_{\mathrm{fail}}>T$ on the event $E_T(z_{\price},z_{\xi})$, and \Cref{eq:parametric-failure-bound} follows. The final statement is immediate from the choice of quantiles.
\end{proof}

\subsection{Proof of Proposition~\ref{prop:parametric-handoff}}
\begin{proof}
By \Cref{cor:fee-only}, fee-only feasibility holds exactly when $(1-\theta)\xi_t\ge A_N\underline{s}$, which is equivalent to $\xi_t\ge \xi^{\mathrm{FO}}$. Since
\[
\frac{A_N}{1-\theta}
=
\frac{\kappa}{N(1-\theta)}+\frac{N+1}{N}\beta
>
\beta,
\]
the condition $\xi_t\ge \xi^{\mathrm{FO}}$ implies $\underline{s}\le \xi_t/\beta$, so the capacity condition in \Cref{cor:fee-only} is automatic. This proves the identity for $\tau_{\mathrm{hand}}$.

Let
\[
Y_t:=\log \xi_t = y_0+\mu_{\xi} t+\sigma_{\xi} \sum_{k=1}^t \eta_k.
\]
Because $(\eta_t)$ is i.i.d. with mean zero, the strong law of large numbers implies
\[
\frac{1}{t}\sum_{k=1}^t \eta_k \to 0
\qquad \text{almost surely.}
\]
Hence
\[
\frac{Y_t}{t}\to \mu_{\xi}>0
\qquad \text{almost surely,}
\]
so $Y_t\to \infty$ almost surely. Therefore $\tau_{\mathrm{hand}}<\infty$ almost surely.

For part (ii), the event $\{\xi_T\ge \xi^{\mathrm{FO}}\}$ is contained in $\{\tau_{\mathrm{hand}}\le T\}$. Since
\[
Y_T\sim N(y_0+\mu_{\xi} T,\sigma_{\xi}^2T),
\]
we obtain
\[
\Prob(\tau_{\mathrm{hand}}\le T)
\ge
\Prob(Y_T\ge y^{\mathrm{FO}})
=
1-\Phi\!\left(\frac{y^{\mathrm{FO}}-y_0-\mu_{\xi} T}{\sigma_{\xi}\sqrt{T}}\right).
\]

For the upper bound in part (iii), the tail-sum formula gives
\[
\E[\tau_{\mathrm{hand}}]
=
\sum_{t=0}^{\infty}\Prob(\tau_{\mathrm{hand}}>t).
\]
If $\tau_{\mathrm{hand}}>t$, then $Y_t<y^{\mathrm{FO}}$. Therefore
\[
\Prob(\tau_{\mathrm{hand}}>t)
\le
\Prob(Y_t<y^{\mathrm{FO}})
=
\Phi\!\left(\frac{y^{\mathrm{FO}}-y_0-\mu_{\xi} t}{\sigma_{\xi}\sqrt{t}}\right)
\qquad \text{for } t\ge 1.
\]
This yields the stated upper bound. Because the argument of $\Phi$ is asymptotically of order $-\sqrt{t}$, the Gaussian tail decays exponentially in $t$, so the series converges and $\E[\tau_{\mathrm{hand}}]<\infty$.

Now define the i.i.d. increments
\[
X_k:=\mu_{\xi}+\sigma_{\xi} \eta_k,
\qquad
Y_t=y_0+\sum_{k=1}^t X_k.
\]
Since $\E[\tau_{\mathrm{hand}}]<\infty$ and $\E[X_k]=\mu_{\xi}$, Wald's identity implies
\[
\E[Y_{\tau_{\mathrm{hand}}}]
=
y_0+\mu_{\xi} \E[\tau_{\mathrm{hand}}].
\]
Because $Y_{\tau_{\mathrm{hand}}}\ge y^{\mathrm{FO}}$ by definition of the hitting time,
\[
y_0+\mu_{\xi} \E[\tau_{\mathrm{hand}}]
\ge
y^{\mathrm{FO}},
\]
which proves the lower bound in \Cref{eq:parametric-handoff-lower}.

For part (iv), the event
\[
\{\tau_{\mathrm{hand}}\le T\}\cap \{\tau_{\mathrm{fail}}>T\}
\]
is contained in $\{\tau_{\mathrm{hand}}<\tau_{\mathrm{fail}}\}$. Therefore
\begin{align*}
\Prob(\tau_{\mathrm{hand}}<\tau_{\mathrm{fail}})
&\ge
\Prob(\tau_{\mathrm{hand}}\le T,\tau_{\mathrm{fail}}>T) \\
&\ge
1-\Prob(\tau_{\mathrm{hand}}>T)-\Prob(\tau_{\mathrm{fail}}\le T).
\end{align*}
Applying the bounds from part (ii) and \Cref{thm:parametric-failure} gives \Cref{eq:parametric-success-prob}.
\end{proof}

\end{document}